\documentclass[12pt]{article}
\usepackage{amssymb}
\usepackage{amsmath}
\usepackage{amsthm}
\usepackage{calc}
\usepackage{xcolor}
\usepackage{cancel}
\usepackage{tikz,pgfplots}
\usepackage{tkz-euclide}
\usetikzlibrary{positioning,shapes,fit,arrows}
\usetikzlibrary {arrows.meta}

\usepackage[affil-it]{authblk}
 
\usepackage{graphicx}

\newtheorem{theorem}{Theorem}
\newtheorem{prop}[theorem]{Proposition}
\newtheorem{lem}[theorem]{Lemma}
\newtheorem{cor}[theorem]{Corollary}

\newtheorem{remark}[theorem]{Remark}
\newenvironment{rem}{\begin{remark} \rm}{\end{remark}}
\newtheorem{example}[theorem]{Example}

\def\d{{\mathrm d}}

\usepackage{cancel}

\def\parpo#1#2{\{#1,#2\}}

\newcommand{\RR}{{{\mathbb{R}}}}
\newcommand{\brke}[2]{{{\langle #1,#2\rangle}}}

\def\M{{\mathcal M}}
\def\Tr{\operatorname{Tr}}
\def\cycsum{\sum_{\circlearrowleft(\alpha,\beta,\gamma)}}

\begin{document}
\title{
An involutivity theorem for a class\\ of Poisson quasi-Nijenhuis manifolds
}
\date{} 

\author{E.\ Chu\~no Vizarreta${}^{1}$, G.\ Falqui${}^{2,5}$, I.\ Mencattini${}^3$, 
M.\ Pedroni${}^{4,5}$
}

\affil{
{\small  $^1$Unidade Acad\^emica de Belo Jardim, Universidade Federal Rural de Pernambuco, Brazil}\\
{\small eber.vizarreta@ufrpe.br  
}\\
\medskip
{\small  $^2$Dipartimento di Matematica e Applicazioni, Universit\`a di Milano-Bicocca, Italy
}\\
{\small  gregorio.falqui@unimib.it 
}\\
\medskip
{\small $^3$Instituto de Ci\^encias Matem\'aticas e de Computa\c c\~ao,  Universidade de S\~ao Paulo, Brazil}\\
{\small igorre@icmc.usp.br 
}\\
\medskip
{\small $^4$Dipartimento di Ingegneria Gestionale, dell'Informazione e della Produzione,  Universit\`a di Bergamo, Italy}\\
{\small marco.pedroni@unibg.it 
}\\
\medskip
{\small  $^5$INFN, Sezione di Milano-Bicocca, 
Milano, Italy}
}

\maketitle
\abstract{This 
note aims to continue our study about the applications of Poisson quasi-Nijenhuis geometry to the theory of classical completely integrable systems. More precisely, we will present new versions of the deformation and involutivity theorems, under the hypothesis that the closed 2-form triggering the deformation and the closed 3-form defining the Poisson quasi-Nijenhuis structure are factorized. These results will be supplemented by several 
examples of involutive Poisson quasi-Nijenhuis manifolds.
\noindent
}

\medskip\par\noindent
{\bf Keywords:} Integrable systems; Toda lattices; Poisson quasi-Nijenhuis manifolds; bi-Hamiltonian manifolds. 
\medskip\par\noindent
{\bf Mathematics Subject Classification:} 37J35, 53D17, 70H06.  

\baselineskip=0,6cm

\section{Introduction}

The notion of a \emph{Poisson-Nijenhuis}  structure, introduced in \cite{MagriMorosiRagnisco85}, see also \cite{KM}, is a geometrical framework to describe complete integrability of classical Hamiltonian systems. In a few words, the reason for this state of affairs can be summarized as follows: the presence of a recursion operator, i.e., a \emph{torsion free} (1,1) tensor in the definition of a Poisson-Nijenhuis structure, guarantees the existence of a distinguished set of functions which are in involution with respect to all the Poisson brackets associated to the underlying structure. 
It is a natural and interesting problem to look for geometric structures weaker than the Poisson-Nijenhuis ones but that can still be used to picture classical complete integrability.

An interesting proposal, aimed to generalize the notion of a Poisson-Nijenhuis structure, was done 
in \cite{SX}, where \emph{Poisson quasi-Nijenhuis} structures were introduced.
For 
these structures, the torsion of the corresponding (1,1) tensor is not zero but it is controlled by the presence of a closed 3-form, see next section for a precise definition. However, this is not enough to ensure 
the existence of a distinguished set of Poisson commuting functions, characteristic of the Poisson-Nijenhuis case. 

Rephrasing what we wrote above, 
and echoing Remark 3.13 in \cite{SX}, one could say that it is an interesting problem to pin down a set of conditions warranting the existence of a (maximal) set of Poisson commutative functions on a Poisson quasi-Nijenhuis manifold, to link its geometry to the theory of classical integrable systems.

In a series of papers \cite{FMOP2020,FMP2023, DMP2024, CFMP2025} we started to investigate this problem. Inspired by the example of the classical Toda systems, we introduced the notion of an involutive Poisson quasi-Nijenhuis structure. Moreover, we found a set of sufficient conditions for a Poisson quasi-Nijenhuis structure to be involutive, see Theorem \ref{thm:invthmv}, and we 
described a way to deform a Poisson quasi-Nijenhuis structure into another one, see Theorem \ref{thm:defthmv}.


The present note is meant to continue this line of investigation and 
has the following two goals:\\ 
1) to study the deformation and involutivity problems under suitable factorization assumptions; \\
2) to provide new examples of involutive Poisson quasi-Nijenhuis structures.  

In more details, the plan of the paper is as follows.  In Section 2 we will recall the basic notions of Poisson-Nijenhuis and Poisson quasi-Nijenhuis geometry and we will briefly comment on how Poisson-Nijenhuis structures enter in the theory of classical completely integrable systems. In Section 3 we will 
prove a version of the 
deformation theorem 
(i.e., Theorem \ref{thm:defthmv}), under suitable factorization hypothesis on the closed 2-form triggering the deformation, see Proposition \ref{prop:ab_def}. An application of the latter can be found in Examples \ref{exa:tsiganov} and \ref{exa:ourToda}, both devoted to the Toda system (open and closed). 
Section 4 encloses the most important result of this note, which consists of a new version of the 
involutivity theorem 
(i.e., Theorem \ref{thm:invthmv}). This is proved, again, under a suitable factorization hypothesis on the 
closed 3-form defining the Poisson quasi-Nijenhuis structure, see Theorem \ref{thm:inv}. This result 
is applied 
in Examples \ref{exa:inv1}, \ref{exa:inv2}, \ref{exa:inv3} and \ref{exa:inv4}. 
We would like to stress that the 
Poisson quasi-Nijenhuis structures 
described in Examples \ref{exa:inv2} and \ref{exa:inv4} are, to the best of our knowledge, new. In particular, the one explored in Example \ref{exa:inv4}, 
obtained by deforming the Poisson-Nijenhuis structure 
of the open Toda system of type $D_n$, 
describes an integrable system that does not seem to be associated with affine Lie algebras (see \cite{RSTS}, Table 3). 
We 
also emphasize that the 2-form used to construct the involutive Poisson quasi-Nijenhuis structure in Example \ref{exa:inv4}
is the same 2-form used in the deformation of the Poisson-Nijenhuis structure describing the open Toda system of type $C_n$, see \cite{CFMP2025}.
Finally, Example \ref{exa:inv5} provides an application of Corollary \ref{cor:inv_def} (a consequence of Theorem \ref{thm:inv}) and 
was inspired by Section 6 of \cite{Crampin-Sarlet-Thompson}.


\section{Preliminaries on PN structures, PqN structures and their deformations}
\label{sec:PN-PqN}
In this section we recall some definitions and results about Poisson-Nijenhuis and Poisson quasi-Nijenhuis structures, together with a deformation procedure driven by a closed 2-form. Moreover, we 
comment on the use of the Poisson-Nijenhuis structures in the theory of classical integrable systems. Hereafter, by Poisson-Nijenhuis (bi-Hamiltonian, Poisson quasi-Nijenhuis) manifold we mean a manifold endowed with a Poisson-Nijenhuis (bi-Hamiltonian, Poisson quasi-Nijenhuis) structure.

If $N:T\M\to T\M$ is a $(1,1)$ tensor field on a manifold $\M$, then its {\it Nijenhuis torsion\/} is defined as 
\begin{equation}
\label{tndef1}
T_N(X,Y)=[NX,NY]-N\left([NX,Y]+[X,NY]-N[X,Y]\right).
\end{equation}
Given a $p$-form $\alpha$, with $p\ge 1$, one can construct another $p$-form $i_N\alpha$ as 
\begin{equation}
\label{iNalpha}
i_N\alpha(X_1,\dots,X_p)=\sum_{i=1}^p \alpha(X_1,\dots,NX_i,\dots,X_p).
\end{equation}
If $i_N f=0$ for all functions $f$, then $i_N$ is a derivation of degree zero.
Moreover, the usual Cartan differential can be modified as 
\begin{equation}
\label{eq:dNd}
\d_N=i_N\circ \d-\d\circ i_N.
\end{equation}
It turns out that $
\d_N
^2 =0$ if and only if the torsion of $N$ vanishes.

Suppose now that $(\M,\pi)$ is a Poisson manifold. Then one can define a Lie bracket between 1-forms on $\M$ as
\begin{equation}
\label{eq:liealgpi}
[\alpha,\beta]_\pi=L_{\pi^\sharp\alpha}\beta-L_{\pi^\sharp\beta}\alpha-\d\langle\beta,\pi^\sharp\alpha\rangle,
\end{equation}
where $\pi^\sharp:T^*\M\to T\M$ is defined by $\langle \beta,\pi^\sharp\alpha\rangle=\pi(\alpha,\beta)$.
This Lie bracket can be uniquely extended to all forms on $\M$ in such a way that, if $\eta$ is a $q$-form and $\eta'$ is a $q'$-form, then 
$[\eta,\eta']_\pi$ is a $(q+q'-1)$-form and the following conditions are satisfied:
\begin{itemize}
\item[(K1)] $[\eta,\eta']_\pi=-(-1)^{(q-1)(q'-1)}[\eta',\eta]_\pi$; 
\item[(K2)] $[\alpha,f]_\pi=i_{\pi^\sharp\alpha}\,\d f=\langle \d f,\pi^\sharp\alpha\rangle$ for all $f\in C^\infty(\M)$ and for all 1-forms $\alpha$;
\item[(K3)] 
$[\eta,\cdot]_\pi$ 
is a derivation of degree $q-1$ of the wedge product, that is, 
for any differential form $\eta''$,
\begin{equation}
\label{deriv-koszul}
[\eta,\eta'\wedge\eta'']_\pi=[\eta,\eta']_\pi\wedge\eta''+(-1)^{(q-1)q'}\eta'\wedge[\eta,\eta'']_\pi.
\end{equation}
\end{itemize}
This extension is a {\it graded\/} Lie bracket, in the sense that (besides (K1)) the graded Jacobi identity holds:
\begin{equation}
\label{graded-jacobi}
(-1)^{(q_1-1)(q_3-1)}[\eta_1,[\eta_2,\eta_3]_\pi]_\pi+(-1)^{(q_2-1)(q_1-1)}[\eta_2,[\eta_3,\eta_1]_\pi]_\pi+(-1)^{(q_3-1)(q_2-1)}[\eta_3,[\eta_1,\eta_2]_\pi]_\pi=0,
\end{equation}
where $q_i$ is the degree of $\eta_i$.
It is sometimes called the Koszul bracket --- see, e.g., \cite{FiorenzaManetti2012} and references therein. Its relation with the Poisson bracket is
$$
[\d f,\d g]_\pi=\d\{f,g\},
$$
where 
\begin{equation}
\{f,g\}=\pi(\d f,\d g)=\langle \d g,\pi^\sharp\d f\rangle.\label{eq:poi}
\end{equation}

A Poisson bivector $\pi$ and a $(1,1)$ tensor field $N$ on a manifold $\M$ are said to be {\it compatible\/}, see \cite{MagriMorosiRagnisco85,KM}, if 
\begin{equation}
\label{N-P-compatible}
\begin{split}
&N\pi^\sharp=\pi^\sharp N^*\,,\qquad
\mbox{where $N^*:T^*\M\to T^*\M$ is the transpose of $N$;}\\
&L_{\pi^\sharp\alpha}(N) X-\pi^\sharp
L_{X}(N^*\alpha)+\pi^\sharp
L_{NX}\alpha=0,\qquad\mbox{for all 1-forms $\alpha$ and vector fields $X$.}
\end{split}
\end{equation}
Notice that the first condition means that $N\pi^\sharp$ is skewsymmetric, and entails that $N^l\pi^\sharp$ is skewsymmetric too, for all $l\ge 0$. 
It was proved in \cite{YKS96} that conditions \eqref{N-P-compatible} hold if and only if $\d_N$ is a derivation of $[\cdot,\cdot]_\pi$, 
that is,
\begin{equation}
\label{deriv-wedge}
\d_N[\eta,\eta']_\pi=[\d_N\eta,\eta']_\pi+(-1)^{(q-1)}[\eta,\d_N\eta']_\pi
\end{equation}
if $\eta$ is a $q$-form and $\eta'$ is any differential form. 

A \emph{Poisson-Nijenuis} manifold (PN, from now on) 
is a triple $(\M,\pi,N)$, where $\M$ is a smooth manifold, $\pi$ is a Poisson tensor compatible with $N$, see \eqref{N-P-compatible}, and the torsion of $N$ vanishes, see \eqref{tndef1}. Every PN manifold is endowed with a sequence $\{\pi_k\}_{k\geq 1}$ of pairwise compatible Poisson bivector fields, where $\pi_k^\sharp=N^k\pi^\sharp$ for all $k\geq 1$, and 
the notion of compatibility between two bivector fields is expressed by the vanishing of their Schouten bracket, see for example \cite{DZ}. Note that every PN manifold is, in particular, bi-Hamiltonian, i.e., endowed with a pair of compatible Poisson bivector fields, say $\pi_0=\pi$ and $\pi_1$. On such a manifold, if $\{H_k\}_{k\geq 1}$ is a sequence of functions satisfying 
\begin{equation}
	\pi_1^\sharp\d H_k=\pi_0^\sharp\d H_{k+1}, \quad\forall k\geq 1,\label{eq:lmc}
\end{equation}
then, for all $l,r\geq 1$,
\[
\{H_l,H_r\}_1=0=\{H_l,H_r\}_0,
\]
where the brackets $\{\cdot,\cdot\}_i$, $i=0,1$, are defined in \eqref{eq:poi}. A sequence of functions satisfying \eqref{eq:lmc} is called a \emph{Lenard-Magri chain} relative to $(\pi_0,\pi_1)$. 

A PN manifold, being endowed with an enhanced bi-Hamiltonian structure, has a distinguished Lenard-Magri chain $\{H_k\}_{k\geq 1}$, where $H_k=\frac1{2k}{\Tr N^k}$. In fact, since $T_N= 0$, one can prove that 
\begin{equation}
	N^\ast\d H_k=\d H_{k+1}, \quad\forall k\geq 1,\label{em:pnlm}
\end{equation}
which immediately imply \eqref{eq:lmc}. 
In summary, on a PN manifold $(\M,\pi,N)$ one can find a rich supply of Poisson commuting functions, 
and this is the reason of the importance of PN structures in the theory of classical integrable systems. 


To generalize the PN setting, Sti\'enon and Xu \cite{SX} defined a quadruple $(\M,\pi,N,\phi)$ to be a {\it Poisson quasi-Nijenhuis 
manifold\/} (PqN, from now on) 
if:
\begin{itemize}
\item the Poisson bivector $\pi$ and the $(1,1)$ tensor field $N$ are compatible;
\item the 3-forms $\phi$ and $i_N\phi$ are closed;
\item $T_N(X,Y)=\pi^\sharp\left(i_{X\wedge Y}\phi\right)$ for all vector fields $X$ and $Y$, where $i_{X\wedge Y}\phi$ is the 1-form defined as $\langle i_{X\wedge Y}\phi,Z\rangle=\phi(X,Y,Z)$.
\end{itemize}

\begin{rem}
\begin{itemize}	
\item[] 
\item[(i)] 
A consequence of $N$  having non-zero torsion 
is that $\pi$ is the only Poisson tensor associated to a PqN structure. Unless differently stated, hereafter we will denote with $\{\cdot,\cdot\}$ the corresponding Poisson bracket.
\item[(ii)] A slightly more general definition of PqN manifold was recently proposed in \cite{BursztynDrummond2019}.
\end{itemize}
\end{rem}

Since $T_N\neq 0$ on a PqN manifold, the relations  
in \eqref{em:pnlm} do not hold anymore but 
are replaced by 
\begin{equation}
N^\ast\d H_k=\d H_{k+1}+\phi_{k-1},\qquad k\geq 1,\label{eq:pqngen}
\end{equation}
where the $H_k$ are defined as above and the 1-forms $\phi_k$ are defined by 
\begin{equation}
\langle\phi_k,X\rangle=\frac{1}{2}\Tr(N^ki_X T_N),\qquad k\geq 0,\label{eq:phi_k}
\end{equation}
where  $i_XT_N(Y)=T_N(X,Y)$, for $X,Y$ vector fields on $\M$. Because of the presence of the 1- forms $\phi_k$, 
relations \eqref{eq:pqngen} do not yield, in general, the Poisson involutivity of the functions $H_k$ and for this reason a \emph{general} PqN structure is not suitable to describe complete integrability. As mentioned in the Introduction, inspired
by the case of the Toda systems of type $A_n$, 
we defined an \emph{involutive} PqN structure as one whose functions $\{H_k\}_{k\geq 1}$ form a Poisson commutative family with respect to the unique Poisson structure available, and we proved (see \cite{CFMP2025}) the following 
\begin{theorem}[Involutivity theorem]\label{thm:invthmv}
	Let $(\M,\pi,N,\phi)$ be a PqN manifold and let $H_k=\frac{1}{2k}\Tr N^k$. Suppose that there exists a 2-form $\Omega$ such that:
	\begin{enumerate}
		\item[(a)] $\phi=-2\d H_1\wedge\Omega$;
		\item[(b)] $\Omega(X_j,Y_k)=0$ for all $j,k\geq 1$, where $Y_k=N^{k-1}X_1-X_k$ and $X_k=\pi^\sharp \d H_k$.
	\end{enumerate}
	Then $\{H_j,H_k\}=0$ for all $j,k$.
\end{theorem}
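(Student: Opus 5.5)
The plan is to derive a recursion relating $\{H_j,H_{k+1}\}$ to $\{H_{j+1},H_k\}$, and then to show that hypotheses (a) and (b) make the correction terms vanish.

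\textbf{Step 1: the basic recursion.} By \eqref{eq:pqngen}, $\d H_{k+1}=N^*\d H_k-\phi_{k-1}$. Compatibility gives $N\pi^\sharp=\pi^\sharp N^*$, so
$$NX_j=\pi^\sharp N^*\d H_j=X_{j+1}+\pi^\sharp\phi_{j-1}.$$
Since $\{H_j,H_{k+1}\}=\langle \d H_{k+1},X_j\rangle$, these two facts, together with the skew-symmetry of $\pi$, give
$$\{H_j,H_{k+1}\}=\{H_{j+1},H_k\}-\big(\phi_{j-1}(X_k)+\phi_{k-1}(X_j)\big).$$
Suppose the symmetric defect $S_{j,k}:=\phi_{j-1}(X_k)+\phi_{k-1}(X_j)$ vanishes along the relevant pairs. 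Then we can shift indices: $\{H_j,H_k\}=\{H_{j+1},H_{k-1}\}=\cdots$. This ends either at $\{H_m,H_m\}=0$, or at $\{H_m,H_{m+1}\}=\{H_{m+1},H_m\}$, which is zero by skew-symmetry.

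\textbf{Step 2: compute $\phi_k$ from (a).} From \eqref{eq:phi_k} and $T_N(X,Y)=\pi^\sharp(i_{X\wedge Y}\phi)$ we have $i_XT_N=\pi^\sharp\circ(i_X\phi)^\flat$. Hence $\phi_k(X)=\tfrac12\Tr\big(N^k\pi^\sharp (i_X\phi)^\flat\big)$. Insert $\phi=-2\,\d H_1\wedge\Omega$, so that
$$i_X\phi=-2\big(X(H_1)\,\Omega-\d H_1\wedge i_X\Omega\big).$$
For a decomposable 2-form $\alpha\wedge\beta$, the trace $\Tr(N^k\pi^\sharp(\alpha\wedge\beta)^\flat)$ reduces to pairings such as $\beta(N^k\pi^\sharp\alpha)$. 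Here $N^k\pi^\sharp$ is skew, so the trace equals $\pm 2\,\beta(N^k\pi^\sharp\alpha)$. With $\alpha=\d H_1$ this produces $N^kX_1$. One therefore expects
$$\phi_k(X)=-c_k\,X(H_1)+\epsilon\,2\,\Omega(X,N^kX_1),\qquad c_k=\Tr(N^k\pi^\sharp\Omega^\flat),$$
for a fixed sign $\epsilon$. Getting these signs and factors right is the routine but error-prone part.

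\textbf{Step 3: use (b) and induct.} Put $X=X_j$ and note $X_j(H_1)=\{H_j,H_1\}$. Hypothesis (b) gives $\Omega(X_j,N^{k-1}X_1)=\Omega(X_j,X_k)$, and likewise with $j$ and $k$ swapped. The $\Omega$-contributions to $S_{j,k}$ then cancel by antisymmetry of $\Omega$, leaving
$$S_{j,k}=-c_{j-1}\{H_k,H_1\}-c_{k-1}\{H_j,H_1\}.$$
This is why an induction is needed. We induct on $j+k$, assuming all brackets $\{H_a,H_b\}$ with $a+b<j+k$ vanish. Along the shifting chain from Step 1, every defect $S$ encountered only involves brackets $\{H_a,H_1\}$ with $a+1<j+k$. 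These vanish by the inductive hypothesis, except possibly $\{H_1,H_{n}\}$ at the very start of a chain with $j=1$. But the defect term there is $\{H_{n-1},H_1\}$ and $\{H_1,H_1\}$, both of which are of lower total degree. So the chain closes and $\{H_j,H_k\}=0$.

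The main obstacle I anticipate is Step 2. One must carefully establish the trace identity for $\phi_k$, including the signs and whether $\Omega(X,N^kX_1)$ or $\Omega(N^kX_1,X)$ appears. The cancellation in Step 3 only requires that the $\Omega$-term be antisymmetric in the roles of $j$ and $k$ after applying (b). I would verify this first on a decomposable $\Omega$.
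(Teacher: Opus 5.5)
Your proposal is correct and follows essentially the paper's route. The statement itself is only quoted from \cite{CFMP2025}, but the paper's proof of Theorem \ref{thm:inv} uses the same shift identity \eqref{recadd2} and shows that the correction terms cancel; the sign you left open is $\epsilon=-1$, i.e.\ $\phi_k=-c_k\,\d H_1+2\,i_{N^kX_1}\Omega$, and (b) with the antisymmetry of $\Omega$ removes the $\Omega$-part as you claim. The one difference is that you dispose of the leftover terms $c_{k-1}\{H_j,H_1\}$ by induction on $j+k$, whereas the paper's analogue (Lemma \ref{lem:abg_involutivity}, via \eqref{eq:Yk}) first proves $\langle \d H_1,Y_k\rangle=\{H_1,H_k\}=0$ directly; here this also follows from (a) alone, since $\sum_{l=0}^{k-2}\Omega(N^{k-2-l}X_1,N^lX_1)=0$, and either way closes the argument.
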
 


We conclude this section by recalling a result about the deformation of PqN manifolds. 
It was proved in \cite{DMP2024}, generalizing that in \cite{FMP2023}, where the starting point is a PN manifold. 
For its applications to 
Toda lattices, see \cite{FMP2023,CFMP2025} and Section \ref{sec:inv_thm}.
 
\begin{theorem}[Deformation theorem]\label{thm:defthmv}
\label{thm:gim}
Let $(\M,\pi,N,\phi)$ be a PqN manifold and let $\Omega$ be a closed 2-form. Define as usual $\Omega^\flat:T\M\to T^*\M$ as 
$\Omega^\flat(X)=i_X\Omega$. If 
\begin{equation}
\label{N-phi-hat}
\widehat N=N+\pi^\sharp\,\Omega^\flat\qquad\mbox{and}\qquad\widehat\phi=\phi+\d_N\Omega+\frac{1}{2}[\Omega,\Omega]_\pi,
\end{equation} 
then $(\M,\pi,\widehat N,\widehat\phi)$ is a PqN manifold. 
\end{theorem}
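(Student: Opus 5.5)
We have to verify the three PqN axioms for $(\pi,\widehat N,\widehat\phi)$. The most economical route is through the Kosmann-Schwarzbach–Stiénon–Xu characterization. A PqN structure is a pair $(\pi,N)$ together with a closed 3-form $\phi$ such that:
\begin{itemize}
\item $\d_N$ is a derivation of $[\cdot,\cdot]_\pi$, which by \cite{YKS96} is equivalent to compatibility \eqref{N-P-compatible};
\item $\d_N^2=[\phi,\cdot]_\pi$ on forms, which encodes $T_N=\pi^\sharp i_{X\wedge Y}\phi$;
\item $\d_N\phi=0$, equivalently $\d i_N\phi=0$ once $\d\phi=0$.
\end{itemize}
I would first record the elementary identity
\[
\d_{\widehat N}=\d_N+[\Omega,\cdot]_\pi ,
\]
valid because $\Omega$ is closed. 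This is checked on functions and on exact 1-forms, using $i_{\pi^\sharp\Omega^\flat}\d f=[\Omega,f]_\pi$ up to sign conventions (via (K2), (K3)). It then extends, since both sides are derivations of degree one of the wedge product.

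\textbf{Compatibility.} $\d_{\widehat N}$ is a sum of two derivations of the Koszul bracket: $\d_N$ by hypothesis, and $[\Omega,\cdot]_\pi$ by the graded Jacobi identity \eqref{graded-jacobi}. Hence $\d_{\widehat N}$ is itself a derivation, and $(\pi,\widehat N)$ is compatible by \cite{YKS96}. The skew-symmetry of $\widehat N\pi^\sharp=N\pi^\sharp+\pi^\sharp\Omega^\flat\pi^\sharp$ is also immediate.

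\textbf{Square of the differential.} Next compute
\[
\d_{\widehat N}^2=\d_N^2+\d_N[\Omega,\cdot]_\pi+[\Omega,\d_N\cdot]_\pi+[\Omega,[\Omega,\cdot]_\pi]_\pi .
\]
For the first term, $\d_N^2=[\phi,\cdot]_\pi$. For the middle two terms, the derivation property gives $[\d_N\Omega,\cdot]_\pi$. For the last term, graded Jacobi (with $\Omega$ of degree 2, so odd in the shifted grading) gives $\tfrac12[[\Omega,\Omega]_\pi,\cdot]_\pi$. Altogether,
\[
\d_{\widehat N}^2=[\widehat\phi,\cdot]_\pi ,
\]
which yields the torsion condition for $\widehat N$. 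Evaluating this on functions recovers $T_{\widehat N}=\pi^\sharp i_{X\wedge Y}\widehat\phi$, again by the Kosmann-Schwarzbach equivalence.

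\textbf{Closedness.} Finally, $\d\widehat\phi=0$ follows for each summand separately:
\begin{itemize}
\item $\d\phi=0$ by hypothesis;
\item $\d\,\d_N\Omega=-\d_N\d\Omega=0$, since $\d\d_N=-\d_N\d$ and $\Omega$ is closed;
\item $\d[\Omega,\Omega]_\pi=0$, since $\d$ is a derivation of the Koszul bracket (because $\pi$ is Poisson) and $\d\Omega=0$.
\end{itemize}
It remains to show $\d_{\widehat N}\widehat\phi=0$. Expanding, one needs:
\begin{itemize}
\item $\d_N\phi=0$, given;
\item $\d_N^2\Omega=[\phi,\Omega]_\pi$;
\item the cross terms combine, via the derivation property and $[\Omega,[\Omega,\Omega]_\pi]_\pi=0$ (graded Jacobi), into $[\Omega,\phi]_\pi+[\phi,\Omega]_\pi$, which vanishes by graded symmetry (K1) for a 2-form and a 3-form.
\end{itemize}

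\textbf{Main obstacle.} I expect the hardest part to be the sign bookkeeping in $\d_{\widehat N}=\d_N+[\Omega,\cdot]_\pi$ and in the graded Jacobi manipulations. In particular, one must confirm that the relative factor in $\widehat\phi$ is exactly $+\tfrac12$ with the paper's conventions for $i_N$ and the Koszul bracket. I would fix these signs first by testing everything on functions and 1-forms in local coordinates.
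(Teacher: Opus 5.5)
The paper gives no proof of this statement. It quotes it from \cite{DMP2024}, which works in the quasi-Lie bialgebroid setting: a PqN structure corresponds to $((T^*\M)_\pi,\d_N,\phi)$, and the deformation is its twist by $\Omega$. Your proposal is correct, and it is that twisting argument carried out by hand on differential forms, so it can stand as a self-contained replacement for the citation.

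The sign issue you flag works out in your favour with the paper's conventions. From (K1)--(K3) one gets $[f,\eta]_\pi=i_{\pi^\sharp\d f}\eta$, and hence
\[
\d_{\pi^\sharp\Omega^\flat}f=(\pi^\sharp\Omega^\flat)^*\d f=i_{\pi^\sharp\d f}\Omega=[\Omega,f]_\pi,\qquad [\psi,f]_\pi(X,Y)=\langle\d f,\pi^\sharp i_{X\wedge Y}\psi\rangle
\]
for every 3-form $\psi$. A direct computation also gives $\d_M^2f(X,Y)=\langle\d f,T_M(X,Y)\rangle$ for every $(1,1)$ tensor $M$. Consequently:
\begin{itemize}
\item $\d_{\widehat N}=\d_N+[\Omega,\cdot]_\pi$ holds with a plus sign;
\item the coefficient $+\tfrac12$ in $\widehat\phi$ is right;
\item the torsion condition for $\widehat N$ follows from $\d_{\widehat N}^2f=[\widehat\phi,f]_\pi$ by direct evaluation, with no further appeal to an equivalence theorem.
\end{itemize}

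Two minor bookkeeping points. First, you need $\d_N^2=[\phi,\cdot]_\pi$ on all forms to get $\d_N^2\Omega=[\phi,\Omega]_\pi$. Extending it from functions to exact 1-forms uses $\d\phi=0$, just as $\d\Omega=0$ is used to identify $\d_N+[\Omega,\cdot]_\pi$ with $\d_{\widehat N}$. Second, in the last step the terms do not literally ``combine into'' $[\Omega,\phi]_\pi+[\phi,\Omega]_\pi$. Rather:
\begin{itemize}
\item $\tfrac12\d_N[\Omega,\Omega]_\pi=-[\Omega,\d_N\Omega]_\pi$ cancels $[\Omega,\d_N\Omega]_\pi$;
\item $[\Omega,[\Omega,\Omega]_\pi]_\pi=0$;
\item the remaining terms $\d_N^2\Omega+[\Omega,\phi]_\pi=[\phi,\Omega]_\pi+[\Omega,\phi]_\pi$ vanish by (K1).
\end{itemize}

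Each route has its advantage. Yours shows that closedness of $\Omega$ enters only in those two places. The bialgebroid formulation absorbs the graded-Jacobi computations into a general twisting formula and places the deformation within the Dirac-structure framework of \cite{DMP2024}.
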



\section{Deformations with factorized 2-forms}
\label{sec:def-factorized}


In this section we consider some special deformations of a PqN manifold, governed by 2-forms 
which are the wedge product of two 1-forms. We start with

\begin{lem}
\label{lem:f-g-I}
Let $\pi$ be a Poisson tensor on $\M$ and let $N$ be a (1,1) tensor field compatible with $\pi$. 
Suppose that $\alpha$, $\beta$, and $\gamma$ are 1-forms on $\M$ such that
$$
\d_N\alpha=\alpha\wedge \gamma,\qquad \d_N\beta=\beta\wedge \gamma
.
$$
If
$$
\delta=\beta+\frac12[\beta,\alpha]_\pi\quad\mbox{and}\quad\Omega=\alpha\wedge\delta,
$$
then 
\begin{equation}
\label{eq:Maurer-Cartan}
\d_N\Omega+\frac12[\Omega,\Omega]_\pi=\alpha\wedge\epsilon,
\end{equation}
where
\begin{equation}
\label{eq:epsilon}
\epsilon=2\gamma\wedge \beta+\frac32\gamma\wedge [\beta,\alpha]_\pi+\frac12 \beta\wedge [\alpha,\gamma]_\pi+
[\delta,\alpha]_\pi\wedge\delta.
\end{equation}
\end{lem}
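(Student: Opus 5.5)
The plan is a direct computation that uses only the algebraic properties of $\d_N$ and of the Koszul bracket recalled in Section~\ref{sec:PN-PqN}. The guiding observation is that the right-hand side of \eqref{eq:Maurer-Cartan} only involves $\epsilon$ through $\alpha\wedge\epsilon$. So I will work throughout \emph{modulo the ideal generated by $\alpha$}, discarding any term that already contains a factor $\alpha$ once it is wedged with $\alpha$.

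\textbf{Preliminary facts.}
\begin{itemize}
\item Since $i_N f=0$ for functions, $i_N$ is a degree-zero derivation of the wedge product. Hence $\d_N=i_N\circ\d-\d\circ i_N$ is a graded derivation of degree $1$, so that $\d_N(\eta\wedge\eta')=\d_N\eta\wedge\eta'+(-1)^q\eta\wedge\d_N\eta'$.
\item Because $\pi$ and $N$ are compatible, \eqref{deriv-wedge} holds. For $q=1$ it gives $\d_N[\beta,\alpha]_\pi=[\d_N\beta,\alpha]_\pi+[\beta,\d_N\alpha]_\pi$.
\item For 1-forms, (K1) gives $[\alpha,\alpha]_\pi=0$ and $[\eta,\alpha]_\pi=-[\alpha,\eta]_\pi$ for a 2-form $\eta$.
\end{itemize}

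\textbf{Step 1: the term $\d_N\Omega$.} By the derivation rule,
\[
\d_N\Omega=\d_N\alpha\wedge\delta-\alpha\wedge\d_N\delta=\alpha\wedge(\gamma\wedge\delta-\d_N\delta).
\]
By hypothesis $\d_N\beta=\beta\wedge\gamma$. Using (K3) and (K1) I expect
\[
\d_N[\beta,\alpha]_\pi=[\beta\wedge\gamma,\alpha]_\pi+[\beta,\alpha\wedge\gamma]_\pi
=2[\beta,\alpha]_\pi\wedge\gamma-\beta\wedge[\alpha,\gamma]_\pi+\alpha\wedge[\beta,\gamma]_\pi .
\]
The last term dies after wedging with $\alpha$. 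Inserting $\delta=\beta+\frac12[\beta,\alpha]_\pi$ then produces the contributions $2\gamma\wedge\beta$, a multiple of $\gamma\wedge[\beta,\alpha]_\pi$, and $\frac12\beta\wedge[\alpha,\gamma]_\pi$ in $\epsilon$.

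\textbf{Step 2: the term $\frac12[\Omega,\Omega]_\pi$.} Set $\eta=\Omega=\alpha\wedge\delta$. By (K3) with $q=2$,
\[
[\eta,\alpha\wedge\delta]_\pi=[\eta,\alpha]_\pi\wedge\delta-\alpha\wedge[\eta,\delta]_\pi .
\]
Next, $[\eta,\alpha]_\pi=-[\alpha,\alpha\wedge\delta]_\pi=-\alpha\wedge[\alpha,\delta]_\pi$, because $[\alpha,\alpha]_\pi=0$. Also $[\eta,\delta]_\pi=-[\delta,\alpha\wedge\delta]_\pi=-[\delta,\alpha]_\pi\wedge\delta$, since $[\delta,\delta]_\pi=0$. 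Both pieces are therefore multiples of $\alpha$, and they combine to
\[
\tfrac12[\Omega,\Omega]_\pi=\alpha\wedge[\delta,\alpha]_\pi\wedge\delta .
\]
This is the last term of \eqref{eq:epsilon}.

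\textbf{Step 3: collecting terms.} Adding Steps~1 and~2 and collecting the coefficients of $\gamma\wedge\beta$, $\gamma\wedge[\beta,\alpha]_\pi$, $\beta\wedge[\alpha,\gamma]_\pi$ and $[\delta,\alpha]_\pi\wedge\delta$ should reproduce \eqref{eq:epsilon}. Throughout, I will drop terms proportional to $\alpha$ and use $\gamma\wedge\delta=\gamma\wedge\beta+\frac12\gamma\wedge[\beta,\alpha]_\pi$.

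\textbf{Main obstacle.} The only delicate part is sign bookkeeping in the graded rules (K1), (K3) and \eqref{deriv-wedge}, especially for the coefficient $\frac32$ in front of $\gamma\wedge[\beta,\alpha]_\pi$, which comes from combining Step~1 with the $\gamma\wedge\delta$ term. I would double-check each sign by testing the identities on exact forms $\d f$, where $[\d f,\d g]_\pi=\d\{f,g\}$ gives an independent check.
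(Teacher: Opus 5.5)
Your proposal is correct and follows essentially the same route as the paper. You expand $\d_N\Omega=\d_N\alpha\wedge\delta-\alpha\wedge\d_N\delta$, use \eqref{deriv-wedge} together with (K1)/(K3) to get $\d_N[\beta,\alpha]_\pi=2[\beta,\alpha]_\pi\wedge\gamma-\beta\wedge[\alpha,\gamma]_\pi+\alpha\wedge[\beta,\gamma]_\pi$ (last term killed by $\alpha\wedge$), and obtain $[\Omega,\Omega]_\pi=2\alpha\wedge[\delta,\alpha]_\pi\wedge\delta$ via (K3). Your signs check out, and the coefficient $\tfrac32$ arises as $\tfrac12+1$, exactly as in the paper's computation.
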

\begin{proof}
Using \eqref{deriv-wedge} and the properties of the Koszul bracket, we have that 
\begin{equation*}
\begin{split}
\d_N\Omega
&=\d_N(\alpha\wedge\delta)
=\d_N\alpha\wedge\delta-\alpha\wedge\d_N\delta
=\alpha\wedge\gamma\wedge\left(\beta+\frac12[\beta,\alpha]_\pi\right)-\alpha\wedge\left(\d_N\beta+\frac12\d_N[\beta,\alpha]_\pi\right)\\
&=\alpha\wedge\left\{\gamma\wedge\left(\beta+\frac12[\beta,\alpha]_\pi\right)-\beta\wedge\gamma
    -\frac12[\d_N\beta,\alpha]_\pi-\frac12[\beta,\d_N\alpha]_\pi\right\}\\
&=\alpha\wedge\left(2\gamma\wedge\beta+\frac12\gamma\wedge[\beta,\alpha]_\pi-\frac12[\beta\wedge\gamma,\alpha]_\pi
    -\frac12[\beta,\alpha\wedge\gamma]_\pi\right)\\
&=\alpha\wedge\left(2\gamma\wedge\beta+\frac12\gamma\wedge[\beta,\alpha]_\pi+\frac12[\alpha,\beta\wedge\gamma]_\pi
    -\frac12[\beta,\alpha\wedge\gamma]_\pi\right)\\
&=\alpha\wedge\left(2\gamma\wedge\beta+\frac12\gamma\wedge[\beta,\alpha]_\pi+\frac12[\alpha,\beta]_\pi\wedge\gamma
    +\frac12\beta\wedge[\alpha,\gamma]_\pi-\frac12[\beta,\alpha]_\pi\wedge\gamma-\frac12\alpha\wedge[\beta,\gamma]_\pi\right)\\
&=\alpha\wedge\left(2\gamma\wedge\beta+\frac32\gamma\wedge[\beta,\alpha]_\pi+\frac12\beta\wedge[\alpha,\gamma]_\pi\right).
\end{split}
\end{equation*}
With similar computations one obtains
\begin{equation*}
[\Omega,\Omega]_\pi=[\alpha\wedge\delta,\alpha\wedge\delta]_\pi=2\alpha\wedge[\delta,\alpha]_\pi\wedge\delta.
\end{equation*}
Hence we have that $\d_N\Omega+\frac12[\Omega,\Omega]_\pi=\alpha\wedge\epsilon$, where $\epsilon$ is given by \eqref{eq:epsilon}.
\end{proof}

\begin{prop}
\label{prop:ab_def}
Let $(\M,\pi,N,\phi)$ be a PqN manifold. Suppose that $\alpha$ and $\beta$ are closed 1-forms on $\M$ such that
\begin{equation}
\label{recurrence}
\d_N\alpha=\alpha\wedge \beta,\qquad \d_N\beta=0.
\end{equation}
If
\begin{equation}
\label{eq:deltaOmega}
\delta=\beta+\frac12[\beta,\alpha]_\pi\quad\mbox{and}\quad\Omega=\alpha\wedge\delta,
\end{equation}
then $(\M,\pi,\widehat N,\widehat \phi)$ is a PqN manifold, where
$$
\widehat N=N+\pi^\sharp\Omega^\flat,\qquad
\widehat \phi=
\phi+\frac12\alpha\wedge \left[[\beta,\alpha]_\pi,\alpha\right]_\pi\wedge\delta=
\phi+\frac12\left[[\alpha,\beta]_\pi,\alpha\right]_\pi\wedge\Omega.
$$
\end{prop}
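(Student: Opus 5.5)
The plan is to obtain the statement as a direct application of Theorem~\ref{thm:defthmv}, with Lemma~\ref{lem:f-g-I} supplying the explicit form of the correction term. Two things are needed:
\begin{itemize}
\item[(i)] the 2-form $\Omega=\alpha\wedge\delta$ is closed, so that Theorem~\ref{thm:defthmv} applies and $(\M,\pi,\widehat N,\phi+\d_N\Omega+\frac12[\Omega,\Omega]_\pi)$ is a PqN manifold;
\item[(ii)] the 3-form $\d_N\Omega+\frac12[\Omega,\Omega]_\pi$ coincides with the two expressions in the statement.
\end{itemize}

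For (i), since $\alpha$ is closed we have $\d\Omega=-\alpha\wedge\d\delta$, so it suffices to show that $\delta$ is closed. Because $\beta$ is closed, $\d\delta=\frac12\d[\beta,\alpha]_\pi$. Here I will use the standard fact that the de Rham differential is a derivation of the Koszul bracket:
$$
\d[\eta,\eta']_\pi=[\d\eta,\eta']_\pi+(-1)^{q-1}[\eta,\d\eta']_\pi .
$$
This is the special case $N=\mathrm{Id}$ of \eqref{deriv-wedge}, or equivalently it follows from the construction of $[\cdot,\cdot]_\pi$ via $\d$ and $\pi$. Since $\d\alpha=\d\beta=0$, it gives $\d[\beta,\alpha]_\pi=0$, hence $\d\delta=0$ and $\d\Omega=0$.

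For (ii), I apply Lemma~\ref{lem:f-g-I} with $\gamma=\beta$. Its hypotheses hold: $\d_N\alpha=\alpha\wedge\beta$ by assumption, and $\d_N\beta=0=\beta\wedge\beta$. The lemma then gives $\d_N\Omega+\frac12[\Omega,\Omega]_\pi=\alpha\wedge\epsilon$, where, after dropping $\beta\wedge\beta=0$ and using $[\alpha,\beta]_\pi=-[\beta,\alpha]_\pi$ (by (K1) for 1-forms),
$$
\epsilon=\beta\wedge[\beta,\alpha]_\pi+[\delta,\alpha]_\pi\wedge\delta .
$$
Next I expand
$$
[\delta,\alpha]_\pi=[\beta,\alpha]_\pi+\tfrac12\bigl[[\beta,\alpha]_\pi,\alpha\bigr]_\pi .
$$
In $[\delta,\alpha]_\pi\wedge\delta$, the term $[\beta,\alpha]_\pi\wedge\beta$ cancels $\beta\wedge[\beta,\alpha]_\pi$, and $[\beta,\alpha]_\pi\wedge[\beta,\alpha]_\pi=0$. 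What remains should regroup as
$$
\epsilon=\tfrac12\bigl[[\beta,\alpha]_\pi,\alpha\bigr]_\pi\wedge\delta ,
$$
which yields the first expression for $\widehat\phi$. The second expression follows by swapping the order of $\alpha$ and $[[\beta,\alpha]_\pi,\alpha]_\pi$, which costs a sign because both are 1-forms, and then replacing $[\beta,\alpha]_\pi$ by $-[\alpha,\beta]_\pi$. The two signs cancel and give $\frac12[[\alpha,\beta]_\pi,\alpha]_\pi\wedge\alpha\wedge\delta$, which is $\frac12[[\alpha,\beta]_\pi,\alpha]_\pi\wedge\Omega$.

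The only step I expect to need care is (i), namely justifying that $\d$ is a derivation of the Koszul bracket. If one prefers not to invoke it as known, it can be checked directly on closed 1-forms: writing them locally as $\d f,\d g$, one has $[\d f,\d g]_\pi=\d\{f,g\}$, which is exact and hence closed. The remaining work in (ii) is sign bookkeeping.
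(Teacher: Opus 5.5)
Your proposal is correct and is the paper's proof: apply Lemma~\ref{lem:f-g-I} with $\gamma=\beta$, note that $\Omega$ is closed, invoke Theorem~\ref{thm:gim}, and observe that $\epsilon$ collapses to $\frac12\bigl[[\beta,\alpha]_\pi,\alpha\bigr]_\pi\wedge\delta$. You also fill in details the paper leaves implicit: the closedness of $\delta$ (which follows from $\d$ being a derivation of the Koszul bracket, i.e.\ the case $N=\mathrm{Id}$ of \eqref{deriv-wedge}), the cancellation yielding $\epsilon$, and the sign bookkeeping for the second expression of $\widehat\phi$.
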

\begin{proof}
It is sufficient to apply Lemma \ref{lem:f-g-I} in the particular case $\gamma=\beta$ and then Theorem \ref{thm:gim}. Notice that $\Omega$ is closed since $\alpha$ and $\delta$ are closed. The 3-form of the deformed PqN manifold is 
$$
\widehat \phi=\phi+\d_N\Omega+\frac12[\Omega,\Omega]_\pi=\phi+\alpha\wedge\epsilon,
$$
and it turns out that 
$$
\epsilon= \frac12\left[[\beta,\alpha]_\pi,\alpha\right]_\pi\wedge\delta.
$$
\end{proof}


\begin{rem}
\label{rem:PNtoPN}
As an easy consequence of the previous result, if $(\M,\pi,N)$ is a PN manifold and $\left[[\beta,\alpha]_\pi,\alpha\right]_\pi=0$, then the deformed structure, by means of 
\begin{equation}
\label{eq:Omega-PNtoPN}
\Omega=\alpha\wedge\left(\beta+\frac12[\beta,\alpha]_\pi\right),
\end{equation}
is also PN.
\end{rem}

The next two examples are devoted to PN and PqN structures related to open and closed Toda lattices.

\begin{example}
\label{exa:tsiganov}
Suppose that  $({\RR}^{2n},\pi,N
)$ is the Das-Okubo PN structure \cite{DO} of the open Toda lattice. This means that $\pi$ is the canonical Poisson tensor, 
i.e., $\{p_i,q_j\}=\delta_{ij}$, and 
\begin{equation}
\label{N-closedToda}
\begin{split}
N
&=\sum_{i=1}^{n}p_i\left(\partial_{q_i}\otimes \d q_i+\partial_{p_i}\otimes \d p_i\right)+
\sum_{i<j}\left(\partial_{q_i}\otimes \d p_j-\partial_{q_j}\otimes \d p_i\right)\\
&+\sum_{i=1}^{n-1}\,e^{q_i-q_{i+1}}\left(\partial_{p_{i+1}}\otimes \d q_i-\partial_{p_i}\otimes \d q_{i+1}\right).
\end{split}
\end{equation}
If $f=e^{q_n}-e^{-q_1}$ and $g=\sum_{i=1}^n p_i$, then one can check that
$$
\d_N\d f=\d f\wedge\d g,\qquad \d_N\d g=0,\qquad \{\{g,f\},f\}=0.
$$
The second assertion follows from $\d_N H_1=N^*\d H_1=\d H_2$, where $H_k=\frac{1}{2k}\Tr (N^k)$; 
the third one from $\{g,f\}=e^{q_n}+e^{-q_1}$. Thus 
Remark \ref{rem:PNtoPN} can be applied, with $\alpha=\d f$ and $\beta=\d g$. 
In this case, the 2-form to be used to deform the Das-Okubo PN structure into another PN structure,
see \eqref{eq:Omega-PNtoPN}, is 
$\Omega_{\mbox{\scriptsize{\rm Ts}}}=\d f\wedge\d h$, where 
$$
h=g+\frac12\{g,f\}=\sum_{i=1}^n p_i+\frac12(e^{q_n}+e^{-q_1}).
$$ 
We have that 
$$
\Omega_{\mbox{\scriptsize{\rm Ts}}}=(e^{q_n}\d q_n+e^{-q_1}\d q_1)\wedge \sum_{i=1}^n \d p_i + 
e^{q_n-q_1}\d q_1\wedge\d q_n
$$ 
and that 
the deformed PN structure $({\RR}^{2n},\pi,N_{\mbox{\scriptsize{\rm Ts}}})$, where $N_{\mbox{\scriptsize{\rm Ts}}}=
N
+\pi^\sharp\Omega_{\mbox{\scriptsize{\rm Ts}}}^\flat$,
coincides with the one found by Tsiganov in \cite{Tsiganov2}, where it is used to obtain the integrals of motion of the 
closed Toda lattice 
by means of generalized Lenard-Magri recursion relations. 

In \cite{Tsiganov2} it is also noticed that the (canonical) transformation
\begin{equation}
\label{tsig-canon}
Q_j=q_j\ \forall j=1,\dots,n,\quad
P_j=p_j\ \forall j=2,\dots,n-1,\quad
P_1=p_1-e^{-q_1},\quad
P_n=p_n-e^{q_n},
\end{equation}
sends the Das-Okubo PN structure into $({\RR}^{2n},\pi,N_{\mbox{\scriptsize{\rm Ts}}})$. Thus the traces of the powers of 
$N_{\mbox{\scriptsize{\rm Ts}}}$ coincide, via the diffeomorphism \eqref{tsig-canon}, with the traces of the powers of 
$N$, i.e., with the integrals of motion of the open Toda lattice.

In the 3-particle case, we have that 
\begin{equation*}
\label{N-Toda-open}
N
=\left(\begin {array}{ccc|ccc} 
p_{{1}}&0&0&0&1&1\\ 
0&p_{{2}}&0&-1&0&1\\ 
0&0&p_{{3}}&-1&-1&0\\ 
\hline
0&-{e^{q_{{1}}-q_{{2}}}}&0&p_{{1}}&0&0\\ 
{e^{q_{{1}}-q_{{2}}}}&0&-{e^{q_{{2}}-q_{{3}}}}&0&p_{{2}}&0\\ 
0&{e^{q_{{2}}-q_{{3}}}}&0&0&0&p_{{3}}
\end {array}\right)
\end{equation*}
and
\begin{equation*}
\label{N-Tsiganov}
N_{\mbox{\scriptsize{\rm Ts}}}=
\left(\begin {array}{ccc|ccc} 
p_{{1}}+e^{-q_1}&0&e^{q_3}&0&1&1\\ 
e^{-q_1}&p_{{2}}&e^{q_3}&-1&0&1\\ 
e^{-q_1}&0&p_{{3}}+e^{q_3}&-1&-1&0\\ 
\hline
0&-{e^{q_{{1}}-q_{{2}}}}&e^{q_3-q_1}&p_{{1}}+e^{-q_1}&e^{-q_1}&e^{-q_1}\\ 
{e^{q_{{1}}-q_{{2}}}}&0&-{e^{q_{{2}}-q_{{3}}}}&0&p_{{2}}&0\\ 
-e^{q_3-q_1}&{e^{q_{{2}}-q_{{3}}}}&0&e^{q_3}&e^{q_3}&p_{{3}}+e^{q_3}
\end {array}\right).
\end{equation*}
\end{example}

\begin{remark}
The PN manifold obtained in \cite{Tsiganov2}, denoted above as $({\RR}^{2n},\pi,N_{\mbox{\scriptsize{\rm Ts}}})$, is a particular case of a general construction presented in \cite{Tsiganov1} and based on the theory of the Classical Yang-Baxter equation, more precisely on the Sklyanin brackets associated to its solutions.
\end{remark}


\begin{example}
\label{exa:ourToda}
We start again with the Das-Okubo PN structure of the open Toda lattice. If $f=e^{q_n}$, $g=e^{-q_1}$, and $I=\sum_{i=1}^n p_i$,
then one can check that Lemma \ref{lem:f-g-I} can be applied, with $\alpha=\d f$, $\beta=\d g$, and $\gamma=\d I$. 
As in Example \ref{exa:tsiganov}, we have that $\delta=\d h$, where 
$$
h=g+\frac12\{g,f\}=g=e^{-q_1}.
$$ 
So $\Omega=\alpha\wedge\beta=\d f\wedge\d g=e^{q_n-q_1}\d q_1\wedge \d q_n$, to be called $\Omega_1$ in the sequel. Hence
the deformed PqN structure is $({\RR}^{2n},\pi,N_-,\phi_-)$, where $N_-=
N
+\pi^\sharp\Omega_1^\flat$ and the 3-form $\phi_-$ turns out to be, see \eqref{N-phi-hat}, \eqref{eq:Maurer-Cartan},
and \eqref{eq:epsilon},
\begin{equation}
\label{phi-}
\phi_-=\d_N\Omega_1+\frac12[\Omega_1,\Omega_1]_\pi=\alpha\wedge\epsilon=\d f\wedge\left(2\d I\wedge\d g\right)
=2 e^{q_n-q_1}\d I\wedge\d q_n\wedge\d q_1.
\end{equation}
For more details on this PqN structure and its relations with the 
closed Toda lattice, see \cite{CFMP2025}. Notice that the 2-form $\Omega$ appearing in \cite{CFMP2025} is minus the 
$\Omega_1$ appearing here. 
In the 3-particle case, we have that
\begin{equation}
\label{N-}
N_-=
N
+\pi^\sharp\Omega_1^\flat=
\left(\begin {array}{ccc|ccc} 
p_{{1}}&0&0&0&1&1\\ 
0&p_{{2}}&0&-1&0&1\\ 
0&0&p_{{3}}&-1&-1&0\\ 
\hline
0&-{e^{q_{{1}}-q_{{2}}}}&e^{q_3-q_1}&p_{{1}}&0&0\\ 
{e^{q_{{1}}-q_{{2}}}}&0&-{e^{q_{{2}}-q_{{3}}}}&0&p_{{2}}&0\\ 
-e^{q_3-q_1}&{e^{q_{{2}}-q_{{3}}}}&0&0&0&p_{{3}}
\end {array}\right).
\end{equation}
If we take $f=-e^{q_n}$, then we obtain the PqN manifold called $({\RR}^{2n},\pi,N_+,\phi_+)$ in \cite{CFMP2025}.
\end{example}

\begin{rem}
\label{rem:N_2}
\begin{enumerate}
\item[]
\item 
As seen in Example \ref{exa:tsiganov}, one can obtain the Tsiganov PN structure from the Das-Okubo one applying the canonical transformation \eqref{tsig-canon}. It is an interesting, completely open problem, to find a characterization of the closed 2-forms inducing a deformation which can be described by a canonical transformation of the underlying Poisson manifold.
\item One can obtain the Tsiganov PN structure from the PqN structure $({\RR}^{2n},\pi,N_-,\phi_-)$ by means of the 2-form 
$\Omega_2=\Omega_{\mbox{\scriptsize{\rm Ts}}}-\Omega_1$, 
where $\Omega_{\mbox{\scriptsize{\rm Ts}}}$ and $\Omega_1$ are given in Examples \ref{exa:tsiganov} and \ref{exa:ourToda}, respectively.
It turns out that 
$$
\Omega_2=(e^{q_n}\d q_n+e^{-q_1}\d q_1)\wedge \sum_{i=1}^n \d p_i =\d f\wedge\d g,
$$
with $f$ and $g$ as in Example \ref{exa:tsiganov}. Moreover, if one deforms the Das-Okubo PN structure by means of $\Omega_2$, then one obtains another PqN structure $({\RR}^{2n},\pi,N_2,\phi_2)$. One can check that $\phi_2=-\phi_-$
and that, in the 3-particle case, 
\begin{equation}
\label{pi-omega}
N_2=
N
+\pi^\sharp\Omega_2^\flat=
\left(\begin {array}{ccc|ccc} 
p_{{1}}+e^{-q_1}&0&e^{q_3}&0&1&1\\ 
e^{-q_1}&p_{{2}}&e^{q_3}&-1&0&1\\ 
e^{-q_1}&0&p_{{3}}+e^{q_3}&-1&-1&0\\ 
\hline
0&-{e^{q_{{1}}-q_{{2}}}}&0&p_{{1}}+e^{-q_1}&e^{-q_1}&e^{-q_1}\\ 
{e^{q_{{1}}-q_{{2}}}}&0&-{e^{q_{{2}}-q_{{3}}}}&0&p_{{2}}&0\\ 
0&{e^{q_{{2}}-q_{{3}}}}&0&e^{q_3}&e^{q_3}&p_{{3}}+e^{q_3}
\end {array}\right).
\end{equation}
We know that $\Omega_{\mbox{\scriptsize{\rm Ts}}}$ gives rise to a PN structure, so it must be
$$
\d_{N
}\Omega_{\mbox{\scriptsize{\rm Ts}}}+\frac12[\Omega_{\mbox{\scriptsize{\rm Ts}}},\Omega_{\mbox{\scriptsize{\rm Ts}}}]_\pi=0.
$$
Since $\Omega_{\mbox{\scriptsize{\rm Ts}}}=\Omega_1+\Omega_2$, this fact can also be seen as a consequence of $\phi_2=-\phi_-$ and 
$[\Omega_1,\Omega_2]_\pi=0$ (see Remark 4 in \cite{DMP2024}). Figure  \ref{fig:defDO} summarizes the deformation processes. Notice that a vertical downward arrow labelled $\Omega_2-\Omega_1$ could also be added.
\end{enumerate}
\end{rem}


%

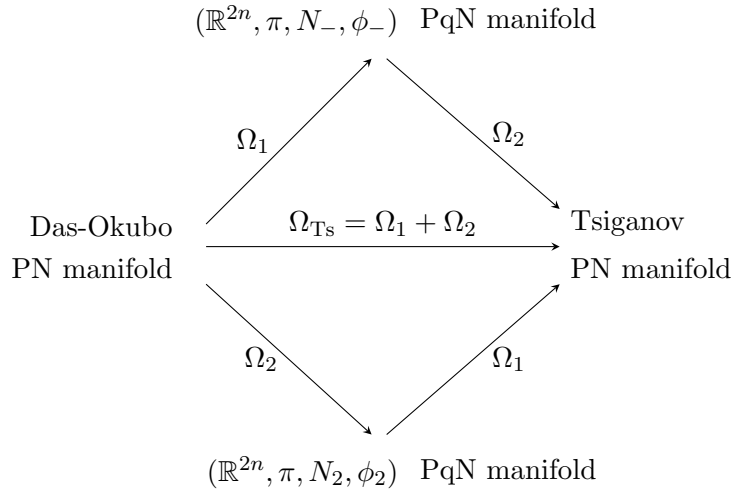
\begin{figure}[h]
\centering
\begin{minipage}{0.7\textwidth}
\begin{center}
\begin{tikzpicture}[scale=0.1]
\coordinate (D) at (-30,0);
\coordinate (T) at (20,0);
\coordinate (O) at (0,30);
\coordinate (N) at (0,-30);
\draw (D) node[above left]{\small Das-Okubo} node[below left]{\small PN manifold};
\draw (T) node[above right]{\small Tsiganov} node[below right]{\small PN manifold};
\draw (O) node[left]{\small $({\RR}^{2n},\pi,N_-,\phi_-)$} node[right]{\small PqN manifold};
\draw (N) node[left]{\small $({\RR}^{2n},\pi,N_2,\phi_2)$} node[right]{\small PqN manifold};
\draw[->,-stealth] (D)+(3,0) -- +(50,0) node[midway,above]{\small $\Omega_{\mbox{\scriptsize{\rm Ts}}}=\Omega_1+\Omega_2$};
\draw[->,-stealth] (D)+(3,3) -- +(25,25) node[midway,left]{\small $\Omega_1\;$};
\draw[->,-stealth] (O)+(-3,-5) -- +(20,-25) node[midway,right]{\small $\;\Omega_2$};
\draw[->,-stealth] (N)+(-3,5) -- +(20,25) node[midway,right]{\small $\;\Omega_1$};
\draw[->,-stealth] (D)+(3,-5) -- +(25,-25) node[midway,left]{\small $\quad\Omega_2$};
\end{tikzpicture}
\end{center}
\end{minipage}
\caption{
Deformations from the Das-Okubo PN structure.}\label{fig:defDO}
\end{figure}

\bigskip


%



\section{An involutivity theorem}
\label{sec:inv_thm}

In this section we present a simple sufficient condition for a PqN manifold to be involutive.
We start with

\begin{lem}
\label{lem:abg_involutivity}
Let $(\M,\pi,N,\phi)$ be a PqN manifold such that $\phi=\alpha\wedge\beta\wedge\gamma$, where $\alpha$, $\beta$, and $\gamma$ are 1-forms. 
If 
$$
H_k=\frac{1}{2k}\Tr(N^k),\qquad X_k=\pi^{\sharp}\d H_k,\qquad Y_k=N^{k-1}X_1-X_k,
$$
then 
\begin{equation}
\label{eq:alpha_Yk}
\langle\alpha,Y_k\rangle=0\qquad \text{for all }k\geq 1.
\end{equation}
\end{lem}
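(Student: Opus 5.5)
The plan is to reduce $\langle\alpha,Y_k\rangle$ to a sum of terms $\langle\phi_j,N^m\pi^\sharp\alpha\rangle$, compute each term explicitly from the factorization $\phi=\alpha\wedge\beta\wedge\gamma$, and show that they cancel in pairs.

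\textbf{Step 1: a recursion for the obstruction.} Set $D_k=(N^*)^{k-1}\d H_1-\d H_k$. Since $Y_k=N^{k-1}\pi^\sharp\d H_1-\pi^\sharp\d H_k$ and $N^{k-1}\pi^\sharp=\pi^\sharp(N^*)^{k-1}$, we have $Y_k=\pi^\sharp D_k$. Clearly $D_1=0$. Using \eqref{eq:pqngen}, $N^*\d H_k=\d H_{k+1}+\phi_{k-1}$, so
$$
D_{k+1}=N^*D_k+\phi_{k-1},
\qquad\text{hence}\qquad
D_k=\sum_{j+m=k-2}(N^*)^m\phi_j .
$$
Therefore
$$
\langle\alpha,Y_k\rangle=\sum_{j+m=k-2}\langle\alpha,\pi^\sharp(N^*)^m\phi_j\rangle=-\sum_{j+m=k-2}\langle\phi_j,Z_m\rangle,
\qquad Z_m=N^m\pi^\sharp\alpha .
$$
This uses the skew-symmetry of $N^m\pi^\sharp$. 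It then suffices to show that $\langle\phi_j,Z_m\rangle$ is antisymmetric under $j\leftrightarrow m$: the index set $\{(j,m):j+m=k-2\}$ is symmetric, so the sum vanishes.

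\textbf{Step 2: computing $\langle\phi_j,Z_m\rangle$.} By \eqref{eq:phi_k} and the PqN torsion condition, $i_{Z}T_N(Y)=\pi^\sharp(i_{Z\wedge Y}\phi)$. Expand $i_{Z\wedge Y}(\alpha\wedge\beta\wedge\gamma)$ with $Z=Z_m$. The key simplification is $\alpha(Z_m)=\langle\alpha,N^m\pi^\sharp\alpha\rangle=0$, again by skew-symmetry. So only the terms containing $\beta(Z_m)$ or $\gamma(Z_m)$ survive, and $i_{Z_m}T_N$ becomes a sum of rank-one maps of the form $Y\mapsto a(Y)\,\pi^\sharp b$, with $a,b\in\{\alpha,\beta,\gamma\}$. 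For such a map, $\Tr(N^j\,\pi^\sharp b\otimes a)=\langle a,N^j\pi^\sharp b\rangle$.

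Introduce $a_r=\langle\alpha,N^r\pi^\sharp\beta\rangle$ and $c_r=\langle\alpha,N^r\pi^\sharp\gamma\rangle$. Using skew-symmetry once more to rewrite every pairing in terms of these scalars, one should obtain, up to an overall sign that is irrelevant,
$$
\langle\phi_j,Z_m\rangle=\pm\,(a_jc_m-a_mc_j).
$$
This expression is manifestly antisymmetric in $(j,m)$, which finishes the argument.

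\textbf{Main obstacle.} The only delicate point is the sign bookkeeping in the expansion of $i_{Z\wedge Y}\phi$ and in the traces. These signs must actually produce the antisymmetric combination $a_jc_m-a_mc_j$ rather than a symmetric one such as $a_jc_m+a_mc_j$. In particular, the $\beta\gamma$-type traces $\langle\beta,N^j\pi^\sharp\gamma\rangle$ must appear multiplied by $\alpha(Z_m)$, which is zero. I would carry out this check carefully, for instance by verifying it first for $k=3$, where the single relevant term is $\langle\phi_0,Z_1\rangle+\langle\phi_1,Z_0\rangle$.
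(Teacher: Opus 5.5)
Your proposal is correct and follows essentially the paper's route: the paper also uses $Y_k=\pi^\sharp\sum_{l=0}^{k-2}(N^*)^{k-l-2}\phi_l$ (it cites this from earlier work rather than deriving it from \eqref{eq:pqngen} as you do), computes $\phi_l=\cycsum\langle\beta,N^l\pi^\sharp\gamma\rangle\alpha$ from the factorized torsion, and then cancels the terms of $-\sum_l\langle\phi_l,N^{k-l-2}\pi^\sharp\alpha\rangle$ in pairs under $l\leftrightarrow k-2-l$. The sign check you deferred does come out right: $\langle\phi_j,Z_m\rangle=a_mc_j-a_jc_m$, and the $\langle\beta,N^j\pi^\sharp\gamma\rangle$ term is killed by $\langle\alpha,Z_m\rangle=0$ exactly as you anticipated.
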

\begin{proof}
First of all, we show that 
\begin{equation}
\label{eq:T-fact}
T_N=\cycsum\pi^\sharp\alpha\otimes\beta\wedge\gamma.
\end{equation} 
Indeed, for any pair of vector fields $X$, $Y$,
\begin{equation*}
\begin{aligned}
\label{eq:TN-fact}
T_N(X,Y)
&=\pi^\sharp\left(i_{X\wedge Y}\phi\right)
=\pi^\sharp\left(i_{X\wedge Y}(\alpha\wedge\beta\wedge\gamma)\right)
=\pi^\sharp\left(i_Y i_X(\alpha\wedge\beta\wedge\gamma)\right)\\
&=\pi^\sharp\left(i_Y \left(\cycsum\langle\alpha,X\rangle\beta\wedge\gamma\right)\right)
=\pi^\sharp\left(\cycsum\langle\alpha,X\rangle\left(\brke{\beta}{Y}\gamma-\brke{\gamma}{Y}\beta\right)\right)\\
&=\cycsum\langle\alpha,X\rangle\left(\brke{\beta}{Y}\pi^\sharp\gamma-\brke{\gamma}{Y}\pi^\sharp\beta\right)
=\cycsum\left(\brke{\beta}{X}\brke{\gamma}{Y}-\brke{\beta}{Y}\brke{\gamma}{X}\right)\pi^\sharp\alpha\\
&=\cycsum\left(\beta\wedge\gamma\right)(X,Y)\pi^\sharp\alpha.
\end{aligned}
\end{equation*} 
It 
follows from \eqref{eq:T-fact} that
\begin{equation}
\label{eq:iT-fact}
i_X T_N=\cycsum\pi^\sharp\alpha\otimes\left(\brke{\beta}{X}\gamma-\brke{\gamma}{X}\beta\right).
\end{equation} 
Now considering the 1-forms \eqref{eq:phi_k} and 
using \eqref{eq:iT-fact}, we have that
\begin{equation*}
\label{varphi-bis}
\begin{aligned}
\brke{\phi_l}{X}
&= \frac12\Tr\left(N^{l}
\cycsum\pi^\sharp\alpha\otimes\left(\brke{\beta}{X}\gamma-\brke{\gamma}{X}\beta\right)\right)\\
&= \frac12\cycsum\left(
\brke{\beta}{X}\Tr\left(
N^{l}\pi^\sharp\alpha\otimes\gamma
\right)-
\brke{\gamma}{X}\Tr\left(
N^{l}\pi^\sharp\alpha\otimes\beta
\right)\right)\\
&= \frac12\cycsum\left(
\brke{\beta}{X}\brke{\gamma}{N^{l}\pi^\sharp\alpha}-\brke{\gamma}{X}\brke{\beta}{N^{l}\pi^\sharp\alpha}
\right)\\
&= \cycsum\brke{\alpha}{X}\brke{\beta}{N^{l}\pi^\sharp\gamma},
\end{aligned}
\end{equation*}
where in the last equality we used the skewsymmetry of $N^{l}\pi^\sharp$. Therefore, 
\begin{equation}
\label{varphi-ter}
\phi_l=\cycsum\brke{\beta}{N^{l}\pi^\sharp\gamma}\alpha.
\end{equation}
%
Finally, we recall (see \cite{CFMP2025}, before equation (56)) that 
\begin{equation}\label{eq:Yk}
Y_k=\pi^{\sharp}\sum_{l=0}^{k-2}\left(N^{*}\right)^{k-l-2}\phi_l.
\end{equation} 
Thus 
we have that
\begin{equation*}
\label{eq:alpha_Yk_b}
\begin{aligned}
&\langle\alpha,Y_k\rangle\\
&=\sum_{l=0}^{k-2}\langle\alpha,\pi^{\sharp}\left(N^{*}\right)^{k-l-2}\phi_l\rangle
=-\sum_{l=0}^{k-2}\langle\phi_l,N^{k-l-2}\pi^{\sharp}\alpha\rangle\\
&=-\sum_{l=0}^{k-2}\left(
\brke{\alpha}{N^{k-l-2}\pi^{\sharp}\alpha}\brke{\beta}{N^{l}\pi^\sharp\gamma}+
\brke{\beta}{N^{k-l-2}\pi^{\sharp}\alpha}\brke{\gamma}{N^{l}\pi^\sharp\alpha}+
\brke{\gamma}{N^{k-l-2}\pi^{\sharp}\alpha}\brke{\alpha}{N^{l}\pi^\sharp\beta}
\right)\\
&=-\sum_{l=0}^{k-2}\brke{\beta}{N^{k-l-2}\pi^{\sharp}\alpha}\brke{\gamma}{N^{l}\pi^\sharp\alpha}+
\sum_{l=0}^{k-2}\brke{\gamma}{N^{k-l-2}\pi^{\sharp}\alpha}\brke{\beta}{N^{l}\pi^\sharp\alpha}=0,
\end{aligned}
\end{equation*}
where in the last but one equality we used again the skewsymmetry of $N^{l}\pi^\sharp$.
\end{proof}

\begin{cor}
\label{cor:abg_Yk}
If $(\M,\pi,N,\phi)$ is a PqN manifold fulfilling the hypotheses of Lemma \ref{lem:abg_involutivity}, then, for all $k\geq 1$,
\begin{equation}
\label{eq:abg_Yk}
\langle\alpha,Y_k\rangle=\langle\beta,Y_k\rangle=\langle\gamma,Y_k\rangle=0,
\end{equation}
so that $i_{Y_k}\phi=0$. Moreover, it follows from \eqref{varphi-ter} that $\langle\phi_l,Y_k\rangle=0$, implying that
\begin{equation}
\label{eq:phil_Yk}
\langle\phi_l,X_k\rangle=\langle\phi_l,N^{k-1}X_1\rangle.
\end{equation}
\end{cor}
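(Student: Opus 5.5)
The corollary follows from Lemma \ref{lem:abg_involutivity} together with the symmetry of the factorization $\phi=\alpha\wedge\beta\wedge\gamma$ under cyclic permutations.

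\textbf{Step 1: symmetry.} The hypotheses of Lemma \ref{lem:abg_involutivity} depend on the triple $(\alpha,\beta,\gamma)$ only through $\phi$. Cyclic permutations leave the wedge product unchanged:
$$
\alpha\wedge\beta\wedge\gamma=\beta\wedge\gamma\wedge\alpha=\gamma\wedge\alpha\wedge\beta .
$$
So the PqN manifold also satisfies the hypotheses of the lemma with $(\beta,\gamma,\alpha)$ and with $(\gamma,\alpha,\beta)$ in place of $(\alpha,\beta,\gamma)$. The objects $H_k$, $X_k$, $Y_k$ are built from $\pi$ and $N$ alone, so they do not change. Applying the lemma to the three orderings gives \eqref{eq:abg_Yk}.

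\textbf{Step 2: $i_{Y_k}\phi=0$.} Expand the contraction:
$$
i_{Y_k}\phi=\langle\alpha,Y_k\rangle\,\beta\wedge\gamma+\langle\beta,Y_k\rangle\,\gamma\wedge\alpha+\langle\gamma,Y_k\rangle\,\alpha\wedge\beta .
$$
Each coefficient vanishes by Step 1, so the contraction is zero.

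\textbf{Step 3: the forms $\phi_l$.} By \eqref{varphi-ter}, each $\phi_l$ is a combination of $\alpha$, $\beta$, $\gamma$ with function coefficients $\brke{\beta}{N^{l}\pi^\sharp\gamma}$ and its cyclic analogues. Pairing with $Y_k$ and using \eqref{eq:abg_Yk} gives $\langle\phi_l,Y_k\rangle=0$.

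\textbf{Step 4: conclusion.} By definition $Y_k=N^{k-1}X_1-X_k$. Substituting this into $\langle\phi_l,Y_k\rangle=0$ gives \eqref{eq:phil_Yk}.

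There is no real obstacle here. The only point needing care is that Lemma \ref{lem:abg_involutivity} is genuinely invariant under cyclic relabeling: the quantities $H_k$, $X_k$, $Y_k$ must not depend on how the factors of $\phi$ are ordered. This holds because they are defined through $N$ and $\pi$ only.
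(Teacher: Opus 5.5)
Your proposal is correct and matches the argument the paper leaves implicit: $\alpha\wedge\beta\wedge\gamma$ is invariant under cyclic permutations, so Lemma \ref{lem:abg_involutivity} applies with each factor in first position, and then $i_{Y_k}\phi=0$, $\langle\phi_l,Y_k\rangle=0$ (via \eqref{varphi-ter}) and \eqref{eq:phil_Yk} follow exactly as you describe. Your observation that $H_k$, $X_k$, $Y_k$ depend only on $\pi$ and $N$ is precisely what makes the relabeling legitimate.
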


We are now ready for the main result of this paper.

\begin{theorem}
\label{thm:inv}
Let $(\M,\pi,N,\phi)$ be a PqN manifold such that $\phi=\alpha \wedge \beta\wedge \gamma$, where $\alpha=\d H_1$, 
$H_k=\frac{1}{2k}\Tr(N^k)$, and $\beta$, $\gamma$ are 1-forms. Then $(\M,\pi,N,\phi)$ is involutive, i.e., 
$$
\{H_l,H_m\}=0\qquad\mbox{for all $l,m\ge 1$.}
$$
\end{theorem}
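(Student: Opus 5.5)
The plan is to reduce the statement to the Involutivity Theorem (Theorem \ref{thm:invthmv}) by choosing a suitable 2-form $\Omega$ built from the factors $\beta$ and $\gamma$. Since $\alpha=\d H_1$, we have $\phi=\d H_1\wedge\beta\wedge\gamma$. We therefore set
$$
\Omega=-\frac12\,\beta\wedge\gamma ,
$$
so that $-2\,\d H_1\wedge\Omega=\d H_1\wedge\beta\wedge\gamma=\phi$, which is hypothesis (a) of Theorem \ref{thm:invthmv}. As stated there, that theorem asks only for the factorization (a) and the vanishing condition (b), and not for $\Omega$ to be closed. So no closedness of $\beta$ or $\gamma$ is needed.

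The remaining step is hypothesis (b): $\Omega(X_j,Y_k)=0$ for all $j,k\ge1$. Expanding the wedge product gives
$$
\Omega(X_j,Y_k)=-\frac12\left(\brke{\beta}{X_j}\brke{\gamma}{Y_k}-\brke{\beta}{Y_k}\brke{\gamma}{X_j}\right).
$$
It therefore suffices to show $\brke{\beta}{Y_k}=\brke{\gamma}{Y_k}=0$ for every $k$. This is exactly the content of Corollary \ref{cor:abg_Yk}, i.e.\ equation \eqref{eq:abg_Yk}.

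That corollary follows from Lemma \ref{lem:abg_involutivity}, because $\phi=\alpha\wedge\beta\wedge\gamma=\beta\wedge\gamma\wedge\alpha=\gamma\wedge\alpha\wedge\beta$ (cyclic permutations of three 1-forms preserve the sign). Applying the lemma with the roles of the factors permuted gives the vanishing for $\beta$ and for $\gamma$. The lemma places no restriction on the factors, so these permuted applications are legitimate. With (a) and (b) verified, Theorem \ref{thm:invthmv} yields $\{H_l,H_m\}=0$ for all $l,m\ge1$.

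I do not expect any serious obstacle. The substantive work, namely the trace computation of the $\phi_l$ in \eqref{varphi-ter} and the cancellation showing $\brke{\alpha}{Y_k}=0$, has already been done in Lemma \ref{lem:abg_involutivity}. The only points to check carefully are the sign and normalization of $\Omega$ against the convention $\phi=-2\,\d H_1\wedge\Omega$, and that the cyclic symmetry of $\phi$ really allows the lemma to be applied to $\beta$ and $\gamma$.
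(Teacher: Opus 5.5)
Your proof is correct. With $\Omega=-\frac12\beta\wedge\gamma$, hypothesis (a) of Theorem \ref{thm:invthmv} holds identically. Corollary \ref{cor:abg_Yk} then gives $i_{Y_k}\Omega=0$, and hence (b); as you note, the corollary follows from Lemma \ref{lem:abg_involutivity} applied to the cyclically permuted factorizations.

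This is exactly the shortcut the authors mention in the first sentence of their proof, and then choose not to take. Instead they argue directly from the recursion \eqref{recadd2}. By \eqref{eq:phil_Yk}, the two correction terms can be evaluated on $N^{l-2}X_1$ and $N^{m-1}X_1$. Then \eqref{varphi-ter}, together with $X_1=\pi^\sharp\alpha$, shows that $\brke{\phi_{m-1}}{N^{l-2}X_1}$ is antisymmetric under $m-1\leftrightarrow l-2$. (The identity $X_1=\pi^\sharp\alpha$ is what kills the $\alpha$-component, via the skewsymmetry of $N^{l-2}\pi^\sharp$.) So the two corrections cancel and $\parpo{H_l}{H_m}=\parpo{H_{l-1}}{H_{m+1}}$.

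Your route is shorter and makes it transparent that Theorem \ref{thm:inv} is a special case of Theorem \ref{thm:invthmv}. The paper's route imports only the recursion \eqref{recadd2} from \cite{CFMP2025}, rather than the whole earlier theorem. It also displays the cancellation mechanism explicitly and shows exactly where $\alpha=\d H_1$ is used; in your argument that assumption enters only through matching hypothesis (a).
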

\begin{proof}
We could show that the hypotheses of Theorem 4 in \cite{CFMP2025} are satisfied, but we think it is worthwhile to give a self-consistent proof.

We recall from \cite{CFMP2025}  (equation (16)) that
\begin{equation}
\label{recadd2}
\parpo{H_l}{H_m}
=\parpo{H_{l-1}}{H_{m+1}}-\brke{ \phi_{m-1}}{X_{l-1}}-\brke{\phi_{l-2}}{X_m}.
\end{equation}
Using \eqref{eq:phil_Yk}, we obtain
\begin{equation}
\label{recadd3}
\parpo{H_l}{H_m}
=\parpo{H_{l-1}}{H_{m+1}}-\brke{ \phi_{m-1}}{N^{l-2}X_1}-\brke{\phi_{l-2}}{N^{m-1}X_1}.
\end{equation}
From \eqref{varphi-ter}, and taking into account that $X_1=\pi^\sharp\d H_1=\pi^\sharp\alpha$, we have that
\begin{equation}
\begin{aligned}
&\brke{ \phi_{m-1}}{N^{l-2}X_1}\\
&=\brke{\beta}{N^{m-1}\pi^\sharp\gamma}\cancel{\brke{\alpha}{N^{l-2}X_1}}
+\brke{\gamma}{N^{m-1}\pi^\sharp\alpha}\brke{\beta}{N^{l-2}X_1}
+\brke{\alpha}{N^{m-1}\pi^\sharp\beta}\brke{\gamma}{N^{l-2}X_1}\\
&=\brke{\gamma}{N^{m-1}X_1}\brke{\beta}{N^{l-2}X_1}
-\brke{\beta}{N^{m-1}X_1}\brke{\gamma}{N^{l-2}X_1}.
\end{aligned}
\end{equation} 
The sign of the last row changes if $m-1$ and $l-2$ are exchanged. Therefore \eqref{recadd3} boils down to $\parpo{H_l}{H_m}
=\parpo{H_{l-1}}{H_{m+1}}$, which is well know to entail the involutivity of the functions $H_k$.
\end{proof}



We now present some examples of applications of Theorem \ref{thm:inv}. In all of these, the Poisson tensor $\pi$ is the canonical one.

\begin{example}
\label{exa:inv1}
Consider the PqN manifold $({\RR}^{2n},\pi,N_-,\phi_-)$ described in Example \ref{exa:ourToda}. Since 
$$
\phi_-=2\d I\wedge\d g\wedge\d f
$$
and $\d I=\d H^-_1$, where $H^-_k=\frac{1}{2k}\Tr(N_-^k)$, we can apply Theorem \ref{thm:inv} to prove that the PqN manifold is involutive. 
In this case 
\begin{equation}
\label{energy-closed-Toda}
H^-_2=\frac12\sum_{i=1}^{n}p_i^2+\sum_{i=1}^{n-1}{\rm e}^{q_i-q_{i+1}}- {\rm e}^{q_n-q_1}
\end{equation}
is the energy of a closed Toda lattice, where the interaction between the first and the last particle is repulsive. 
To obtain the energy of the classical closed Toda lattice, one has to consider suitable generalized recursion relations  
or, as an alternative, the traces of the powers of the (1,1) tensor field of the PqN structure $({\RR}^{2n},\pi,N_+,\phi_+)$, see \cite{CFMP2025} for more details.

\end{example}

\begin{example}
\label{exa:inv2}
We can apply Theorem \ref{thm:inv} also to the new PqN manifold $({\RR}^{2n},\pi,N_2,\phi_2)$ found in Remark \ref{rem:N_2}.
Indeed, using \eqref{phi-} we have that 
$$
\phi_2=-\phi_-=-2 e^{q_n-q_1}\d I \wedge\d q_n\wedge\d q_1=2 \d I\wedge\d e^{q_n}\wedge\d e^{-q_1},\qquad
\mbox{where $I=\sum_{i=1}^n p_i$.}
$$
Since $H_1=I+e^{-q_1}+e^{q_n}$, where $H_k=\frac{1}{2k}\Tr(N_2^k)$, we have that 
$\phi_2=2 \d H_1\wedge\d e^{q_n}\wedge\d e^{-q_1}$ and so 
the PqN manifold is involutive. We plan to study in the future the integrable system associated to this PqN structure. 
\end{example}


\begin{example}
\label{exa:inv3}
We consider now the PqN structures $({\RR}^{2n},\pi,\widehat{N},\widehat{\phi})$ of the (closed) Toda lattices associated to the Lie algebras $C_n^{(1)}$ and $A_{2n}^{(2)}$. These structures were found in \cite{CFMP2025}, where it was shown that $\widehat{\phi}=-2\d \widehat{H}_1\wedge\Omega$, with $\widehat{H}_k=\frac{1}{2k}\Tr(\widehat{N}^k)$ and 
$\Omega=\d(e^{-2q_1})\wedge\d p_1$. Thus Theorem \ref{thm:inv} immediately implies that these PqN manifolds are involutive, without the need of the lengthy computations in Appendix A of \cite{CFMP2025}.
\end{example}




In the following example, we construct a new involutive PqN structure by deforming the PN structure of the $D_n$-Toda system using the same closed 2-form as in Example \ref{exa:inv3}.

\begin{example}\label{exa:inv4}
We consider the PN structure $({\RR}^{2n},\pi,N)$ of the 
Toda lattice of type $D_n$, i.e., the Hamiltonian system whose Hamiltonian is $H_{D_n}=\frac{1}{2}\sum_{i=1}^np_i^2+\sum_{i=1}^{n-1}e^{q_i-q_{i+1}}+e^{q_{n-1}+q_n}$.  The Nijenhuis tensor $N$ is $(\pi')^\sharp(\pi^\sharp)^{-1}$, where the Poisson bracket associated to the Poisson bivector $\pi'$ 
can be found in \cite{DK}, equation (39).  By means of the closed 2-form $\Omega=\d(e^{-2q_1})\wedge \d p_1$,  
we deform $({\RR}^{2n},\pi,
N)$ to obtain the PqN structure $({\RR}^{2n},\pi,\widehat{N},\widehat{\phi})$. The $3$-form $\widehat{\phi}$ turns out to be 
\begin{equation*}
\begin{split}
\widehat{\phi}&=\d_N\Omega+\frac{1}{2}\left[\Omega,\Omega\right]_{\pi}=\d_N\Omega\\
&=\left(-4e^{-2q_1}\d q_1\wedge N^*\d q_1+2e^{-2q_1}\d(N^*\d q_1)\right)\wedge \d p_1-2e^{-2q_1}\d q_1\wedge \d(N^*\d p_1)\\
&=-8e^{-2q_1}\left(\sum_{i=2}^{n-1}\left(e^{q_{i-1}-q_i}-e^{q_i-q_{i+1}}\right)\d q_1\wedge \d q_i\wedge \d p_1-e^{q_{n-1}+q_n}\d q_1\wedge \d q_{n-1}\wedge \d p_1\right.\\
&\left.+\left(e^{q_{n-1}-q_n}-e^{q_{n-1}+q_n}\right)\d q_1\wedge \d q_n\wedge \d p_1+\sum_{i=2}^np_i\d q_1\wedge \d p_1\wedge \d p_i\right),
\end{split}
\end{equation*}
since
\begin{equation*}
\begin{split}
N^*\d q_1&=-(p_1^2+2e^{q_1-q_2})\d q_1+(e^{q_1-q_2}-2e^{q_2-q_3})\d q_2+2\sum_{i=3}^{n-1}(e^{q_{i-1}-q_i}-e^{q_i-q_{i+1}})\d q_i\\
&-2e^{q_{n-1}+q_n}dq_{n-1}+2(e^{q_{n-1}-q_n}-e^{q_{n-1}+q_n})\d q_n-2\sum_{i=1}p_i\d p_i\\
N^*\d p_1&=e^{q_1-q_2}(p_1+p_2)\d q_2-(p_1^2+2e^{q_1-q_2})dp_1-e^{q_1-q_2}\d p_2.
\end{split}
\end{equation*}
On the other hand, an easy computation shows that
\[
\widehat{H}_1=\frac{1}{2}\Tr(\widehat{N})=\sum_{i=1}^n\lbrace p_i,q_i\rbrace'-2e^{-2q_1}=-2H_{D_n}-2e^{-2q_1}
\]
and $\widehat{\phi}=-2\d\widehat{H}_1\wedge \Omega$.
Thus, Theorem \ref{thm:inv}  implies that the PqN structure $({\RR}^{2n},\pi,\widehat{N},\widehat{\phi})$ is involutive.
\end{example}

For the reader's convenience, we summarize in Table \ref{table1} the PN and PqN deformations discussed above.

\begin{table}[hbt!]
\begin{center}
\begin{tabular}{|c|c|c|}
\hline
{\small Structure}  & {\small 2-form $\Omega$} & {\small Deformed structure} \\[2mm]
\hline\hline
{\small Das-Okubo} & {\small $\Omega_{\mbox{\scriptsize{\rm Ts}}}$} & {\small Tsiganov PN structure}\\
{\small PN structure} & {\small  (Example \ref{exa:tsiganov})} &   {\small  $(\pi,N_{\mbox{\scriptsize{\rm Ts}}})$} \\  [2mm]
\hline
{\small Das-Okubo} & {\small $\Omega_1=\d(e^{q_n})\wedge \d(e^{-q_1})$}  & {\small   involutive PqN}\\ 
{\small PN structure}  &  {\small (Example \ref{exa:ourToda})} & {\small $(\pi,N_-,\phi_-)$}  \\[2mm]
\hline 
{\small Das-Okubo} & {\small $-\Omega_1$} &  {\small  involutive PqN $(\pi,N_+,\phi_+)$}\\ 
{\small PN structure}  & {\small  (Example \ref{exa:ourToda})} & {\small for $A_n^{(1)}$-Toda}  \\[2mm]
\hline 
{\small involutive PqN} & {\small $\Omega_2=\Omega_{\mbox{\scriptsize{\rm Ts}}}-\Omega_1$} & {\small involutive PqN}\\
{\small $(\pi,N_-,\phi_-)$} & {\small (Remark \ref{rem:N_2})} & {\small $(\pi,N_2,\phi_2)$} \\[2mm]
\hline
{\small PN for} & {\small $\Omega=\d(e^{-2q_1})\wedge \d p_1$} & {\small involutive PqN for}\\
{\small $B_n$-Toda \cite{NDC-D}} & {\small (Example \ref{exa:inv3})} & {\small $A_{2n}^{(2)}$-Toda} \\[2mm]
\hline
{\small PN for} & {\small $\Omega=\d(e^{-2q_1})\wedge \d p_1$} & {\small involutive PqN for}\\
{\small $C_n$-Toda \cite{NDC-D}} & {\small (Example \ref{exa:inv3})} & {\small $C_{n}^{(1)}$-Toda}\\[2mm]
\hline
{\small PN for} & {\small $\Omega=\d(e^{-2q_1})\wedge \d p_1$} & {\small involutive PqN for system with Hamiltonian}\\
{\small $D_n$-Toda \cite{DK}} & {\small (Example \ref{exa:inv4})} & {\small $\frac{1}{2}\sum_{i=1}^{n}p_i^2+\sum_{i=1}^{n-1}e^{q_i-q_{i+1}}+e^{q_{n-1}+q_n}+e^{-2q_1}$} \\ [2mm]
\hline
\end{tabular}
\end{center}
\caption{List of deformations}\label{table1}
\end{table}



We finish our paper with the following result, which is a straightforward consequence of Proposition \ref{prop:ab_def} and Theorem \ref{thm:inv}. 


\begin{cor}\label{cor:inv_def}
Let  $(\M,\pi,N)$ be a PN manifold. Suppose that $\alpha$, $\beta$, $\delta$ are closed 1-forms and 
$\Omega$ a closed 2-form such that the hypotheses of Proposition \ref{prop:ab_def}, see \eqref{recurrence}  and  \eqref{eq:deltaOmega}, are fulfilled. Let $(\M,\pi,\widehat N,\widehat \phi)$ be the deformed PqN manifold, defined by
	$$
	\widehat N=N+\pi^\sharp\Omega^\flat,\qquad
	\widehat \phi=
	\frac12\alpha\wedge \left[[\beta,\alpha]_\pi,\alpha\right]_\pi\wedge\delta.
	$$
	If $\alpha=\d{\widehat H}_1$ and ${\widehat H}_k=\frac{1}{2k}\Tr({\widehat N}^k)$, then 
	$$
	\{{\widehat H}_l,{\widehat H}_m\}=0\qquad\mbox{for all $l,m\ge 1$.}
	$$
\end{cor}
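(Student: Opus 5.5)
The plan is to reduce Corollary \ref{cor:inv_def} to Theorem \ref{thm:inv} applied to the deformed PqN manifold $(\M,\pi,\widehat N,\widehat\phi)$. The argument has three steps.

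\emph{Step 1: the deformed structure is PqN.} Since $(\M,\pi,N)$ is PN, it is a PqN manifold with $\phi=0$. Proposition \ref{prop:ab_def} applies directly under hypotheses \eqref{recurrence} and \eqref{eq:deltaOmega}. It shows that $(\M,\pi,\widehat N,\widehat\phi)$ is a PqN manifold with
$$
\widehat\phi=\frac12\,\alpha\wedge\left[[\beta,\alpha]_\pi,\alpha\right]_\pi\wedge\delta .
$$

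\emph{Step 2: $\widehat\phi$ has the factorized form required by Theorem \ref{thm:inv}.} Here $[\beta,\alpha]_\pi$ is a Koszul bracket of two 1-forms, so it is a 1-form. Then $\left[[\beta,\alpha]_\pi,\alpha\right]_\pi$ is again a bracket of two 1-forms, hence a 1-form. Set
$$
\beta'=\frac12\left[[\beta,\alpha]_\pi,\alpha\right]_\pi,\qquad \gamma'=\delta .
$$
Both are 1-forms, and $\widehat\phi=\alpha\wedge\beta'\wedge\gamma'$. By hypothesis the first factor is $\alpha=\d\widehat H_1$ with $\widehat H_k=\frac{1}{2k}\Tr(\widehat N^k)$, i.e. it is built from traces of powers of the \emph{deformed} tensor $\widehat N$.

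\emph{Step 3: conclude.} All hypotheses of Theorem \ref{thm:inv} now hold for the quadruple $(\M,\pi,\widehat N,\widehat\phi)$ with factors $\d\widehat H_1,\beta',\gamma'$. The theorem gives $\{\widehat H_l,\widehat H_m\}=0$ for all $l,m\ge1$.

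The only point needing care is a bookkeeping one. Theorem \ref{thm:inv} (through Lemma \ref{lem:abg_involutivity}) requires only that $\beta,\gamma$ be arbitrary 1-forms, with no closedness or other condition. So the non-closed factor $\beta'$ causes no problem. Also, the PqN axioms for $\widehat\phi$ (closedness of $\widehat\phi$ and $i_{\widehat N}\widehat\phi$, and the torsion condition) are already guaranteed by Step 1. I therefore expect no real obstacle: the proof is essentially this identification of factors.
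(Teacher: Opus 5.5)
Your proposal is correct and is exactly the paper's argument. You regard the PN manifold as a PqN manifold with $\phi=0$, use Proposition~\ref{prop:ab_def} to get $\widehat\phi=\frac12\alpha\wedge\left[[\beta,\alpha]_\pi,\alpha\right]_\pi\wedge\delta$, and then apply Theorem~\ref{thm:inv} to $(\M,\pi,\widehat N,\widehat\phi)$ with the factorization $\d\widehat H_1\wedge\beta'\wedge\delta$. Your check that Theorem~\ref{thm:inv} imposes no closedness or other condition on the second and third factors is the right one to make.
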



The next and last example aims to present an application of the previous corollary. More precisely, we will show that using equations 
\eqref{eq:deltaOmega} 
one can produce a PN deformation of a PN structure relevant to the theory of separation of variables, see Section 6 in \cite{Crampin-Sarlet-Thompson}.

\begin{example}\label{exa:inv5}
Let $Q$ be $\mathbb R^2$ with coordinates $(q_1,q_2)$. Let $L$ be the (1,1) tensor on $Q$ whose matrix expression is 
\[
L=\begin{pmatrix} q_1^2 & q_1q_2\\
	q_1q_2 & q_2^2
	\end{pmatrix}.
\]	
An easy computation shows that its cotangent lift to $T^\ast Q=\mathbb R^4$  has the 
form
\[
N=\begin{pmatrix}
	q_1^2 & q_1q_2 & 0 & 0\\
	q_1q_2 & q_2^2 & 0 & 0\\
	0 & q_1p_2-q_2p_1 & q^2_1 & q_1q_2\\
	q_2p_1-q_1p_2 & 0 & q_1q_2 & q_2^2
\end{pmatrix},
\]
which is torsion free because $L$ is --- see, for example, \cite{Crampin-Sarlet-Thompson,Ibort-Magri-Marmo}. In fact, as one can easily check, 
\[
\left[L\frac{\partial}{\partial q_1},L\frac{\partial}{\partial q_2}\right]=0\quad\text{and}\quad L\left(\left[L\frac{\partial}{\partial q_1},\frac{\partial}{\partial q_2}\right]+\left[\frac{\partial}{\partial q_1},L\frac{\partial}{\partial q_2}\right]\right)=0.
\]
Now let
\begin{equation}
h=\frac{1}{2}(p_1^2+p_2^2)+\frac{1}{2}(q_1^{-2}+q_2^{-2}).\label{eq:hami}
\end{equation}
Its differential is 
\[
\d h=p_1\d p_1+p_2\d p_2-q_1^{-3}\d q_1-q_2^{-3}\d q_2,
\]
whose image via $N^\ast$ is 
\begin{eqnarray*}
N^\ast \d h&=&(-q_1^{-1}-q_1q_2^{-2}+q_2p_1p_2-q_1p_2^2)\d q_1+(-q_2q_1^{-2}-q_2^{-1}+q_1p_1p_2-q_2p_1^2)\d q_2\\
&+&(q_1^2p_1+q_1q_2p_2)\d p_1+(q_1q_2p_1+q_2^2p_2)\d p_2.
\end{eqnarray*}
A tedious but straightforward computation, which uses the formula above, shows that 
\begin{equation}
	\d_N\d h=-\d(N^\ast \d h)=\d h\wedge \d \Tr L.\label{eq:e1}
\end{equation}
Furthermore, since $N$ is a Nijenhuis tensor and $\Tr N=2\Tr L$,
\begin{equation}
	\d_N\d\Tr L=0.\label{eq:e2}
\end{equation} 
In other words, $\alpha=\d h$ and $\beta=\d\Tr L$ satisfy 
equations \eqref{recurrence}. Now one can observe that 
\[
[\beta,\alpha]_\pi=\frac{1}{2}\left[\d(q_1^2+q_2^2),\d(p_1^2+p_2^2)+\d(q_1^{-2}+q_2^{-2})\right]_\pi=\frac{1}{2}\d\{q_1^2+q_2^2,p_1^2+p_2^2\}=0,
\]
where $\pi$ is the canonical Poisson tensor on $\mathbb R^4$. The previous computation yields 
$\delta=\beta$ and $\Omega=\alpha\wedge\beta$, see formulas 
\eqref{eq:deltaOmega}. Written in coordinates,
\[
\Omega=-2(p_1q_1\d p_1\wedge \d q_1+p_2q_1\d p_2\wedge \d q_1+p_1q_2\d p_1\wedge \d q_2+p_2q_2\d p_2\wedge \d q_2),
\]
which entails
\begin{equation}
	{\widehat N}=N+\pi^\sharp\Omega^\flat=\begin{pmatrix}
		q_1^2+2p_1q_1 & q_1q_2+2p_1q_2 & 0 & 0\\
		q_1q_2+2p_2q_1 & q_2^2+2p_2q_2 & 0 & 0\\
		0 & q_1p_2-q_2p_1 & q^2_1+2p_1q_1 & q_1q_2+2p_1q_2\\
		q_2p_1-q_1p_2 & 0 & q_1q_2+2p_2q_1 & q_2^2+2p_2q_2
	\end{pmatrix}.
	\end{equation}
	Finally, note that $\widehat\phi=0$ since $[\beta,\alpha]_\pi=0$. This implies that $(\pi,\widehat N)$ is a PN deformation of the original PN structure $(\pi,N)$, see also Remark \ref{rem:PNtoPN}.
\end{example}

\begin{remark}
The initial PN structure of the previous example belongs to a class of Poisson-Nijenhuis structures which play an important role in the theory of separation of variables, see \cite{Ibort-Magri-Marmo, Crampin-Sarlet-Thompson, FP03} and reference therein. Equations \eqref{recurrence} are a crucial ingredient in the bi-Hamiltonian approach to the theory of separation of variables, see again \cite{Ibort-Magri-Marmo, Crampin-Sarlet-Thompson}. As we have already remarked, the example above illustrates a PN deformation of a PN structure described in Section 6 of \cite{Crampin-Sarlet-Thompson}. It would be interesting to find involutive PqN deformations of PN structures relevant in the theory of separation of variables.
\end{remark}

\par\medskip\noindent
{\bf Acknowledgments.} 
We thank Andrea Raimondo, Pavel Etingof, Semenov Tian-Shansky and Pol Vanhaecke for useful comments and interesting discussions. MP thanks the ICMC-USP, {\em Instituto de Ci\^encias Matem\'aticas e de Computa\c c\~ao\/} of the University of S\~ao Paulo, and the Department of Mathematics and its Applications of the University of Milano-Bicocca for their hospitality, and the University of Bergamo, for supporting his visit in 2025 to the ICMC-USP within the program {\em Outgoing Visiting Professors}. IM thanks the Department of Mathematics and its Applications of the University of Milano-Bicocca for its hospitality and the University of Bergamo for supporting his visit in 2026.
This project has received funding from the European Union's Horizon 2020 research and innovation programme under the 
Marie Sk{\l}odowska-Curie grant no 778010 {\em IPaDEGAN} as well as by the Italian PRIN 2022 (2022TEB52W) - PE1 - project {\em The charm of integrability: from nonlinear waves to random matrices}. All authors gratefully acknowledge the auspices of the GNFM Section of INdAM, under which part of this work was carried out, and the financial support of the project MMNLP (Mathematical Methods in Non Linear Physics) of the INFN. 


%
%

\thebibliography{99}

%
%
%
%
%
\bibitem{BursztynDrummond2019}  Bursztyn, H., Drummond, T.,
{\it Lie theory of multiplicative tensors}, Math.\ Ann.\ {\bf 375} (2019), 1489--1554.
%

\bibitem{CFMP2025} Chu\~no Vizarreta, E., Falqui, G., Mencattini, I., Pedroni, M.,
{\it Poisson quasi-Nijenhuis manifolds, closed Toda lattices, and generalized recursion relations}, 
Lett.\ Math.\ Phys.\ {\bf 115} (2025), 84 (22 pages).


\bibitem{Crampin-Sarlet-Thompson}  Crampin, M., Sarlet, W., Thompson, G., 
{\it Bi-differential calculi, bi-Hamiltonian systems and conformal Killing tensors},
J. Phys. A {\bf 33} (2000), 8755--8770.





\bibitem{DK} Damianou, P. A., Kouzaris, S. P., {\it Bogoyavlensky-Toda systems of type $D_N$},  J. Phys. A: Math. Gen. {\bf 36} (2003), 1385--1399.

\bibitem{DO} Das, A., Okubo, S., {\it A systematic study of the Toda lattice}, Ann. Physics {\bf 190} (1989), 215--232.

\bibitem{DMP2024} do Nascimento Luiz, M., Mencattini, I., Pedroni, M.,
{\it Quasi-Lie bialgebroids, Dirac structures, and deformations of Poisson quasi-Nijenhuis manifolds\/}, 
Bull.\ Braz.\ Math.\ Soc.\ (N.S.) {\bf 55} (2024), 18 pages.

\bibitem{DZ} Dufour, J.-P., Zung, N.T., {\it Poisson Structures and their Normal Forms}, Progress in Mathematics, Volume 242, Birkh\"auser Verlag, 2005.

%
%
\bibitem{FMOP2020} Falqui, G., Mencattini, I., Ortenzi, G., Pedroni, M.,
{\it Poisson Quasi-Nijenhuis Manifolds and the Toda System\/}, Math.\ Phys.\ Anal.\ Geom.\ {\bf 23} (2020), 17 pages.

\bibitem{FMP2023} Falqui, G., Mencattini, I., Pedroni, M.,
{\it Poisson quasi-Nijenhuis deformations of the canonical PN structure\/}, J.\ Geom.\ Phys.\  {\bf 186} (2023), 10 pages.

\bibitem{FP03} Falqui, G., Pedroni, M., {\it Separation of variables for bi-Hamiltonian systems\/}, 
Math.\ Phys.\ Anal.\ Geom.\ {\bf 6} (2003), 139--179.

\bibitem{FiorenzaManetti2012} Fiorenza, D., Manetti, M.,
{\it Formality of Koszul brackets and deformations of holomorphic Poisson manifolds},
Homology Homotopy Appl. {\bf 14} (2012), 63--75. 


\bibitem{Ibort-Magri-Marmo} Ibort, A., Magri, F., Marmo, G., {\it bi-Hamiltonian structures and St\"ackel separability}, J.\ Geom.\ Phys.\  {\bf 33} (2000), 210--218.

\bibitem{YKS96} Kosmann-Schwarzbach, Y., {\it The Lie Bialgebroid of a Poisson-Nijenhuis Manifold}, Lett. Math. Phys. {\bf 38} (1996), 421--428.

\bibitem{KM} Kosmann-Schwarzbach, Y., Magri, F., {\it Poisson-Nijenhuis structures}, Ann. Inst. Henri Poincar\'e {\bf 53} (1990), 35--81.






\bibitem{MagriMorosiRagnisco85} Magri, F., Morosi, C., Ragnisco, O.,
{\it Reduction techniques for infinite-dimensional Hamiltonian systems: some ideas and applications},
Comm. Math. Phys. {\bf 99} (1985), 115--140. 
%

\bibitem{NDC-D} Nunes da Costa, J.M., Damianou, P.A., {\it Toda systems and exponents of simple Lie groups}, 
Bull. Sci. Math. {\bf 125} (2001), 49--69.




\bibitem{RSTS} Reyman, A.G., Semenov-Tian-Shansky, M.A., 
\emph{Group-Theoretical Methods in the Theory of Finite-Dimensional Integrable Systems}. 
In: Dynamical Systems VII, Encyclopaedia of Mathematical Sciences, vol.\ 16 (Arnol’d, V.I., Novikov, S.P., eds.), 
Springer, Berlin, 1994.
\bibitem{SX} Sti\'enon, M., Xu, P., {\it Poisson Quasi-Nijenhuis Manifolds}, Commun. Math. Phys. {\bf 270} (2007), 709--725.
%

\bibitem{Tsiganov1} Tsiganov, A.V.,
{\it A family of the Poisson brackets compatible with the Sklyanin bracket},
J. Phys. A {\bf 40} (2007), 4803--4816.

\bibitem{Tsiganov2} Tsiganov, A.V.,
{\it On two different bi-Hamiltonian structures for the Toda lattice},
J. Phys. A {\bf 40} (2007), 6395--6406.




\end{document}